\newcommand{\repthanks}[1]{\textsuperscript{\ref{#1}}}
\patchcmd{\maketitle}{\def\thanks}{\let\repthanks\repthanksunskip\def\thanks}{}{}
\patchcmd{\@maketitle}{\def\thanks}{\let\repthanks\@gobble\def\thanks}{}{}
\newcommand\repthanksunskip[1]{\unskip{}}
\newcounter{theoremXXX}
\newtheorem{property_XXX}[theoremXXX]{Property}{\bfseries}{\itshape}
\newtheorem{remark}{Remark}{\itshape}{\rmfamily}
{\itshape}{\rmfamily}
\newtheorem*{claim*}{Claim}{\itshape}{\rmfamily}
\newtheorem*{remark*}{Remark}{\itshape}{\rmfamily}
\newtheorem*{defn*}{Definition}{\bfseries}{\rmfamily}
\newtheorem*{example*}{Example}{\bfseries}{\rmfamily}
\begin{document}

\title{Tight Bounds on Online Checkpointing Algorithms}

\author{Achiya Bar-On}
\affiliation{%
\institution{Department of Mathematics, Bar-Ilan University}
\city{Ramat-Gan}
\country{Israel}
}
\email{abo1000@gmail.com}
\author{Itai Dinur}
\affiliation{%
\institution{Computer Science Department, Ben-Gurion University}
\city{Beer-Sheba}
\country{Israel}
}
\email{}
\author{Orr Dunkelman}
\affiliation{%
\institution{Computer Science Department, University of Haifa}
\city{Haifa}
\country{Israel}
}
\email{orrd@cs.haifa.ac.il}
\author{Rani Hod}
\affiliation{%
\institution{Department of Mathematics, Bar-Ilan University}
\city{Ramat-Gan}
\country{Israel}
}
\email{rani.hod@math.biu.ac.il}
\author{Nathan Keller}
\affiliation{%
\institution{Department of Mathematics, Bar-Ilan University}
\city{Ramat-Gan}
\country{Israel}
}
\email{nkeller@math.biu.ac.il}
\author{Eyal Ronen}
\affiliation{%
\institution{Computer Science Department, The Weizmann Institute}
\city{Rehovot}
\country{Israel}
}
\email{eyal.ronen@weizmann.ac.il}
\author{Adi Shamir}
\affiliation{%
\institution{Computer Science Department, The Weizmann Institute}
\city{Rehovot}
\country{Israel}
}
\email{adi.shamir@weizmann.ac.il}

%

\begin{abstract}
The problem of online checkpointing is a classical problem with numerous
applications which had been studied in various forms for almost $50$ years. In the simplest version of this problem, a user has to maintain $k$ memorized checkpoints during a long computation, 
where the only allowed operation is to move one of the checkpoints
from its old time to the current time, and his goal is to keep the
checkpoints as evenly spread out as possible at all times.

Bringmann et al.~studied this problem as a special case of an online/offline
optimization problem in which the deviation from uniformity is measured by the natural discrepancy metric of the
worst case ratio between real and ideal segment lengths. They showed this discrepancy is smaller than $1.59-o(1)$ for all~$k$, and smaller than $\ln4-o(1)\approx1.39$ for the
sparse subset of $k$'s which are powers of~2. In addition, they obtained upper bounds on the achievable discrepancy for some small values of~$k$.

In this paper we solve the main problems  left open in the 
above-mentioned
paper by proving that $\ln4$ is a tight
upper and lower bound on the asymptotic discrepancy for all large $k$, and by providing tight
upper and lower bounds (in the form of provably optimal checkpointing algorithms, some of which are 
in fact better than those of Bringmann et al.) for all the small values of $k \leq 10$.

In the last part of the paper we describe some new applications of this online checkpointing problem.
\end{abstract}

\begin{CCSXML}
<ccs2012>
<concept>
<concept_id>10003752.10003809.10003636</concept_id>
<concept_desc>Theory of computation~Approximation algorithms analysis</concept_desc>
<concept_significance>300</concept_significance>
</concept>
</ccs2012>
\end{CCSXML}

\ccsdesc[300]{Theory of computation~Approximation algorithms analysis}
\keywords{Checkpoints, Online Algorithms, Competative Analysis}

\maketitle
\renewcommand{\shortauthors}{A.~Bar-On et al.}


\section{\label{sec:Intro}Introduction and Notation}

Most programs perform some irreversible operations, and thus they
can only be run in a forward direction. However, in many cases we
would like to roll back a computation to an earlier point in time.
When the computation is short, we can just rerun the computation from
the beginning, but when the computation requires many days, a better
strategy is to memorize several copies of the full state of the computation
at various times. These memorized states (called \textit{checkpoints})
make it possible to roll the computation back from time $T$ to any
earlier time $T'<T$ by restarting the computation from the last available checkpoint
which was memorized before $T'$. This checkpointing technique is
extremely useful in many real life applications: For example, when
we want to interactively debug a new program we may want to randomly
access earlier points in the execution in order to find the source
of a problem; in fault tolerant computer systems we may want to undo
the effects of a faulty hardware; and during lengthy simulations of physical
systems we may want to explore the effect of changing some parameter,
such as the temperature, at some earlier point in time without rerunning
the simulation from the beginning.

In principle, we can try to memorize the full state of the computation
after each step, but for long computations this requires an unrealistic
amount of memory. Instead, we assume that we have some bounded amount
of memory which suffices to keep $k$ checkpoints. At time $T$, these
checkpoints are spread within the time interval $[0,T]$, dividing
it into $k+1$ subintervals between consecutive checkpoints (where
the endpoints $0$ and $T$ can be viewed as virtual checkpoints which
require no additional memory). As $T$ increases, the last subinterval
gets longer, and at some point we may want to relocate one of the
old checkpoints by reusing its memory to store the current state of
the computation. A checkpointing algorithm can thus be viewed as an
infinite pebbling game in which we place $k$ pebbles on the positive
side of the time axis, and then repeatedly perform update operations
which move one of the pebbles to the right of all the other pebbles.

The first paper dealing with this problem seems to be ``Rollback and Recovery Strategies for Computer Programs''~\cite{CR72}, published in 1972, while the first paper which tried to solve it optimally was ``On the Optimum Checkpoint Interval''~\cite{Gelenbe79}, published in 1979. Over the years, dozens of academic research papers were published in this area, most notably~\cite{TB84} in 1984,~\cite{BGRS94} in 1994, and~\cite{APS13,BDNS13} in 2013. However, many of these papers either dealt with concrete
applications of the problem in other areas (especially in distributed computing where the notion of a timeline is different), or used other optimization criteria (which make their optimal solutions incomparable with ours). The mathematical problem we are dealing with in this paper was mentioned in~\cite{APS13} and studied in~\cite{BDNS13}, and we closely follow their model and notation.

At any time $T$, we define a \textit{snapshot} as the ordered sequence 
of current checkpoint locations $S=(T_{1},\ldots,T_{k})$. Within each snapshot,
we refer to the checkpoints by their \textit{freshness index}~$p$,
where checkpoint~$1$  stores the oldest state and checkpoint~$k$ stores the newest state. Starting
from an initial snapshot $S_{k}=\left(t_{1},t_{2},\ldots,t_{k}\right)$,
we define for every $i\ge k+1$ the $i$-th \textit{update action} as a pair $(t_{i},p_{i})$ in which $p_{i}$
is the freshness index of the checkpoint whose memory we want to reuse by moving it to time $t_{i}$.
A typical example of how one snapshot is transformed into another
snapshot by an update operation is described in Fig.~\ref{fig:transition}.
The effect of the $i$-th update action is to unify the two consecutive
subintervals which were separated by the $p_{i}$-th oldest active
checkpoint at time $T=t_{i}$, and to create a new subinterval which
ends at $t_{i}$. Note that with this notation, each update action
affects multiple freshness indices within the snapshot; in particular,
the freshness index of active checkpoint $T_{j}$ for $1\le j<p_{i}$
is left unchanged, and is decreased by one for $p_{i}<j\le k$. To
demonstrate this point, consider a sequence of updates in which $p_{i}=1$
for all $i\ge k+1$: it updates the $k$ memory locations in a round
robin way since it always updates the oldest active checkpoint by
overwriting it with the newest checkpoint, shifting all freshness
indices by one. On the other hand, a sequence of updates in which
$p_{i}=k$ for all $i\ge k+1$ keeps updating the same memory location,
pushing its associated checkpoint further and further to a later time, with no change to the other checkpoints.

\begin{figure}[h]
\centering{}\includegraphics[clip]{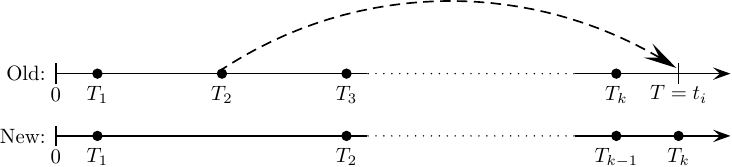}
\caption{\label{fig:transition}Transition from old to new snapshot for the
update action $\left(t_i,2\right)$.}

\end{figure}


In this model, the time complexity of rolling back a computation from
time $T$ to time $T'$ is assumed to be proportional to the distance
between $T'$ and the last checkpoint that precedes $T'$ in the snapshot
at time $T$, and thus its worst case happens when we decide to roll
back to just before the end of the longest subinterval. A \textit{checkpointing
algorithm} $(t,p)$ consists of 
a monotonically increasing and unbounded sequence of update times $t=\{t_{i}\}_{i=1}^\infty$ and a pattern sequence $p=\{p_{i}\}_{i=k+1}^\infty$, forming an initial snapshot and an infinite sequence of update actions; its goal is to make the length of this longest subinterval as short as possible. Clearly, no checkpointing algorithm
can make this length shorter than the subinterval length in a perfectly
uniform partition of $[0,T]$, which is $T/(k+1)$.
We say that a snapshot $S=\left(T_{1},\ldots,T_{k}\right)$
of a $k$-checkpoint algorithm $\textsc{Alg}=\left(t,p\right)$ is
$q$-\emph{compliant} at time $T$ if the
$k+1$ subintervals defined by $S$ satisfy $T_{j}-T_{j-1}\le qT/\left(k+1\right)$
for $j=1,\ldots,k+1$,\footnote{We write $T_{0}=0$ and $T_{k+1}=T$ for convenience.} and that 
\textsc{Alg} is \textit{$q$-efficient} if its snapshots are $q$-compliant at all times $T\ge t_k$.
Finally, the \emph{efficiency} of a checkpointing algorithm is defined as the smallest $q$ for
which it is $q$-efficient. 

Notice that the problem of efficient checkpointing
can be viewed as a special case of an online/offline optimization
problem: If we knew in advance the time $T$ at which we would like
to roll back the computation, we could make each subinterval as small
as $T/(k+1)$. However, in the online version of the problem, we do
not know $T$ in advance, and thus we have to position the checkpoints
so that they will be roughly equally spaced at all times. The efficiency
of the solution is the ratio between what we can achieve in the online
and offline cases, respectively, and the goal of the \textit{online
checkpointing problem} is to find the smallest possible
efficiency $q_{k}$ achievable by the best $k$-checkpoint algorithm for any
given $k$.

Clearly, $q_k\ge1$ for all $k\ge2$, and cannot be too close to $1$ since any snapshot in which all subintervals have 
roughly the same length will be transformed by the next update operation to a snapshot in which one of the subintervals will be the union of two previous
subintervals, and thus will be about twice as long as the other subintervals.\footnote{%
Actually $q_k \ge (k+1)/k$ since subinterval $k+1$ has zero length upon updating, as noted in~\cite[Theorem~3]{APS13}.}
On the other hand, there is a very simple subinterval doubling algorithm from~\cite[Section~3.1]{APS13} which is 2-efficient:
Assuming WLOG that $k$ is even, the algorithm starts with the snapshot $(1,2,3, \ldots , k)$, and performs 
the sequence of update actions $(k+2, 1), (k+4, 2), \ldots , (2k, k/2)$, yielding the snapshot 
$(2,4,6, \ldots, 2k)$. Since this snapshot is the same as the original snapshot up to a scaling factor of~2, we can continue with update actions $(2k+4, 1), (2k+8, 2), \ldots , (4k, k/2)$ and so on. This is a \textit{cyclic} algorithm, repeating the same sequence of freshness indices again and again but with times which form a geometric progression. As in each snapshot there are only
two possible lengths for the subintervals of the form $x$ and $2x$, all the snapshots in this algorithm are 
2-compliant, and thus the algorithm is 2-efficient.

The best strategy for keeping the checkpoints as uniform as possible at all times is thus to keep in 
each snapshot a variety of subinterval lengths, so that the algorithm will always be able to join two 
relatively short adjacent subintervals into a single subinterval which is not too long. This can be 
viewed as a generalization of the algorithm that creates Fibonacci numbers: Whereas the standard algorithm 
is always adding the last two numbers and placing their sum on the right, in our case we can add any two
consecutive numbers in the sequence, replacing them by their sum and adding any number we want on the right.
Analyzing this problem is surprisingly difficult, and so far there had been no tight bounds on the best possible
efficiencies $q_k$ of online checkpointing algorithms in this model. 


The main results in~\cite{BDNS13} are two online checkpointing algorithms whose
asymptotic efficiencies are $\ln4+o(1) \approx 1.39$ for the sparse subset of $k$'s which are powers of 2,
and $1.59$ for general~$k$. In addition, they proved in their model the first nontrivial asymptotic lower bound 
of~$2-\ln2-o(1)\approx1.30$. However, since the upper and lower bounds did not match, 
it was not clear whether the checkpointing algorithms they proposed were asymptotically optimal. 
For small values of $k<60$ they presented concrete checkpointing algorithms 
whose efficiencies were all below $1.55$, but again it was not clear whether they were optimal.

In this paper we solve the main open problems related to the mathematical formulation of the
problem which was defined and studied in~\cite{BDNS13}. In particular, we
develop a new checkpointing algorithm with an asymptotic
efficiency of $\ln4$ for all values of $k$, and prove its optimality by providing a matching asymptotic lower bound. 
For all the small values of $k<10$ we develop optimal checkpointing algorithms by proving tight upper and lower bounds 
on the achievable efficiency for these $k$'s. This analysis enables us to show that for some values of $k$ (such as $k=8$), 
the algorithms presented in~\cite{BDNS13} are in fact suboptimal.  

\medskip{}
The rest of this paper is organized as follows.
In Section~\ref{sec:basic} we go over basic observations about checkpointing algorithms (some from~\cite{BDNS13}, some new).
In Section~\ref{sec:small-k} we focus on moderately small values of $k$ and provide optimal algorithms for $k\le 10$.
In Section~\ref{sec:recursive-construction} we construct a recursive algorithm of asymptotically optimal efficiency $\ln4+o(1)$.
In Section~\ref{sec:lower-bound} we prove a matching asymptotic lower bound of $\ln4-o(1)$.
In Section~\ref{sec:applications} we present two new applications of online checkpointing.
In Section~\ref{sec:conclusion} we provide concluding remarks.


\section{\label{sec:basic}Basic Observations}

By definition, a $k$-checkpoint $\textsc{Alg}=(t,p)$ is $q$-efficient if and only if its snapshots
at all times $T\ge t_{k}$ are $q$-compliant. However, as noted in~\cite[Lemma~2]{APS13} (and also~\cite[Lemma~1]{BDNS13}),
it suffices to verify compliance only at the discrete times $T\in\left\{ t_{i}\right\} _{i=k}^{\infty}$.
It makes sense thus to only consider ``standard'' snapshots $S_{i}$
taken at time $t_{i}$ for $i\ge k$. Moreover, as shown in~\cite[Lemma~2]{BDNS13},
besides compliance of the initial snapshot $S_{k}$, it suffices to
verify compliance of just two subintervals of $S_{i}$ for every $i>k$
\textemdash{} subinterval $k$, that ends in the new checkpoint~$t_{i}$,
and subinterval $p_{i}$, created by merging two consecutive subintervals.

The following two observations about the sequence $p=\left(p_{i}\right)_{i=k+1}^{\infty}$
were mentioned in~\cite[Section~6]{BDNS13} without proof.

\begin{property_XXX}\label{property:update-1-infinitely-often}
Without loss of generality
we can assume a $k$-checkpoint algorithm updates the least recent
checkpoint infinitely often (i.e., ${\displaystyle \liminf_{i\to\infty}p_{i}=1}$).
\end{property_XXX}
\begin{proof}
Fix a $q$-efficient algorithm $\textsc{Alg}=\left(t,p\right)$ and
consider its standard snapshots $S_{i}=\left(T_{1}^{i},\ldots,T_{k}^{i}\right)$
for $i\ge k$. For notational convenience let $\Delta_{j}^{i}=T_{j}^{i}-T_{j-1}^{i}$
for $i\ge k$ and $j=1,\ldots,k$. The update times sequence $t$
is unbounded and $t_{i}=T_{k}^{i}=\sum_{j=1}^{k}\Delta_{j}^{i}$,
so there must exist some minimal $J<k$ for which the sequence $\left(\Delta_{J}^{i}\right)_{i=k}^{\infty}$
is unbounded. If $J=1$ we are done; otherwise we show how to modify
the algorithm, while maintaining $q$-efficiency, such that $\left(\Delta_{J-1}^{i}\right)_{i=k}^{\infty}$
would be unbounded as well.

Let $M=\sup_{i}\Delta_{J-1}^{i}$ and pick $I$ such that $\Delta_{J}^{I}>M$.
No further update action $\left(p_{i},t_{i}\right)$ can update a
checkpoint of freshness index $p_{i}<J$ since that would result in
$\Delta_{J-1}^{i}>M$. In particular, we have $\Delta_{J}^{i}>M$
for all $i\ge M$. Pick $i>I$ such $p_{i}=J$ and $\Delta_{J+1}^{i-1}>M$.
It is possible since $\left(\Delta_{J+1}^{i}\right)_{i=M}^{\infty}$
is unbounded and cannot decrease unless checkpoint $J$ is updated.
We modify the algorithm to update checkpoint $J-1$ instead of $J$
at time $t_{i}$, that is, set $p_{i}=J-1$. Instead of creating a
$q$-compliant subinterval of length
\[
\Delta_{J}^{i}=\Delta_{J}^{i-1}+\Delta_{J+1}^{i-1}>\Delta_{J}^{i-1}+M,
\]
this update action now creates a subinterval of length
\[
\Delta_{J-1}^{i}=\Delta_{J-1}^{i-1}+\Delta_{J}^{i-1}\le\Delta_{J}^{i-1}+M,
\]
which is still $q$-compliant. The algorithm remains $q$-efficient
since future updates only touch subintervals $j\ge J$ and $\Delta_{J}^{i}$
did not grow. This process can be repeated as long as $\sup_{i}\Delta_{J-1}^{i}$
remains finite. Note that in the limiting algorithm, checkpoint $J-1$
is updated infinitely often so $\sup_{i}\Delta_{J-1}^{i}$ cannot
be finite; thus $\left(\Delta_{J-1}^{i}\right)_{i=k}^{\infty}$ must
be unbounded.
\end{proof}

\begin{remark}\label{rem:rebase}
An important consequence of Property~\ref{property:update-1-infinitely-often}
is that we can essentially ignore the compliance of the initial snapshot
$S_{k}$ by \emph{rebasing}, i.e., running the algorithm until all
checkpoints present in $S_{k}$ are overwritten and treating the then-current snapshot as the new initial $\left(t_{1},\ldots,t_{k}\right)$.
\end{remark}

\begin{property_XXX}\label{property:never-update-k}
Without loss of generality we can assume a $k$-checkpoint algorithm never updates the most recent checkpoint (i.e., $p_{i}<k$ for all $i\ge1$).
\end{property_XXX}
\begin{proof}
Fix a $q$-efficient algorithm $\textsc{Alg}=\left(t,p\right)$.
If $p_{i+1}=k$ then the $(i+1)$-th update transforms the snapshot
$S_{i}=\left(T_{1},\ldots,T_{k-1},t_{i}\right)$ to
$S_{i+1}=\left(T_{1},\ldots,T_{k-1},t_{i+1}\right)$. 
We have $t_{i+1}\le T_{k-1}G$ since subinterval~$k$ in $S_{i+1}$ is $q$-compliant, 
so for any $t_{i}\le T\le t_{i+1}$ we have $T-T_{k-1}\le qT/\left(k+1\right)$.
Hence we can simply skip the $i$-th update action $\left(p_{i},t_{i}\right)$
altogether and the algorithm remains $q$-efficient.
\end{proof}
\begin{remark*}
Property~\ref{property:never-update-k} means that
the two last checkpoints in snapshot $S_{i}$ are $t_{i-1}$ and $t_{i}$,
and thus subinterval $k$ is $q$-compliant if and only if $t_{i-1}-t_{i}\le qt_{i}/\left(k+1\right)$,
that is, $t_{i}\le G\cdot t_{i-1}$, where $G=G\left(q\right):=\left(k+1\right)/\left(k+1-q\right)$.
We refer to this condition by saying that the update times sequence
$t=\left(t_{i}\right)_{i=1}^{\infty}$ should be $G$-\emph{subgeometric}.
\end{remark*}

Next we introduce the notion of \emph{cyclic} algorithms. Upper bounds
on $q_{k}$ presented in this paper, as well as in~\cite{APS13,BDNS13},
are all achieved by cyclic algorithms. Given a positive integer $n$
and a real number $\gamma>1$, a $k$-checkpoint algorithm $\textsc{Alg}=\left(t,p\right)$
is $\left(n,\gamma\right)$-\emph{cyclic} if $t_{n+i}=\gamma\cdot t_{i}$
for all $i\ge1$ and $p_{i}=p_{n+i}$ for all $i\ge k+1$. It has
been observed in~\cite[Lemma 5]{BDNS13} that any $q$-efficient
$\left(n,\gamma\right)$-cyclic algorithm must satisfy $\gamma\le G^{n}$
(to see this, apply subgeometry~$n$ times). An $\left(n,\gamma\right)$-cyclic
algorithm is called $\left(n,G\right)$\emph{-geometric} when $\gamma=G^{n}$
(and thus $t_{i+1}=G\cdot t_{i}$ for $i\ge k$).

\medskip{}

We finish this section with two observations about the exponential
growth of update times in efficient algorithms, relevant for upper
and lower bounds on $q_{k}$.

The first one is an improvement of~\cite[Lemma 8]{BDNS13}:
\begin{property_XXX}\label{property:supergeometric}
Any $q$-efficient $k$-checkpoint algorithm $\textsc{Alg}=\left(t,p\right)$ satisfies, without loss of generality, $t_{i+2}>t_{i}\cdot G$ for all $i\ge k$.
\end{property_XXX}
\begin{proof}
Starting with a $q$-efficient algorithm, we consider checkpoint updates
in their natural order and modify the algorithm, while maintaining efficiency,
such that the property holds. At each step, the only modifications
we make are to skip or delay an update, which ensures that $\left(t_{i}\right)_{i=1}^{\infty}$
is still unbounded. The basic simple idea in all modifications is
that if an update that merges two subintervals is possible at some time
$T$, i.e., the newly created subinterval is $q$-compliant, then an
update that merges the same subintervals is also possible at any time
$T'>T$.

If the property does not hold, consider the smallest $i$ for which
$t_{i+2}\le t_{i}\cdot G$. We show we can either skip one of the
updates at times $t_{i+1}$, $t_{i+2}$ or delay the update at time
$t_{i+2}$ until time $t_{i+1}\cdot G>t_{i}\cdot G$. For $j=1,2$,
denote by $x_{j}$ the next time at which the checkpoint at $t_{i+j}$
is updated, and by $\ell_{j}$ the label of the actual checkpoint updated.\footnote{In contrast to a temporary index of the checkpoint in the checkpoint sequence at some snapshot $S$, a label of a checkpoint is fixed.} Note that $\ell_{1}\neq\ell_{2}$ by Property~\ref{property:never-update-k},
and that the order of $x_{1}$ and $x_{2}$ is undetermined.

If $x_{1}<x_{2}$ (the checkpoint $\ell_{2}$ at $t_{i+2}$ is in place
by the time the checkpoint $\ell_{1}$ at $t_{i+1}$ is updated), then
eliminate the checkpoint update at $t_{i+1}$ (keeping the sequence $G$-subgeometric
as $t_{i+2}\le t_{i}\cdot G$) and switch roles between the two checkpoints
throughout the rest of the algorithm. Namely, update $\ell_{2}$ from
its previous update (before being updated at $t_{i+2}$) directly
at time $x_{1}$,\footnote{If checkpoint $\ell_{2}$ was firstly used at $t_{i+2}$, then simply
use $\ell_{2}$ first at $x_{1}$ in the modified algorithm.} and update $\ell_{1}$ from its previous update directly at time
$t_{i+2}$ (and then again at $x_{2}$ and so forth).

Otherwise, $x_{2}<x_{1}$; consider the time interval $(t_{i+1},t_{i+1}\cdot G]$
and denote by $\ell_{3}$ the checkpoint that is removed from it at the
latest time before $x_{1}$ (note that there is at least one such
checkpoint, namely $\ell_{2}$, hence it may be that $\ell_{3}=\ell_{2}$).
Delay the update of $\ell_{3}$ to time $t_{i+1}\cdot G$ and skip
all other updates in $(t_{i+1},t_{i+1}\cdot G]$ that are removed
from this time interval before $x_{1}$. In other words, for each checkpoint
(except $\ell_{3}$) updated in the time interval $(t_{i+1},t_{i+1}\cdot G]$
and updated again in the time interval $(t_{i+1}\cdot G,x_{1})$, update
it directly from its previous update (before $t_{i+1}$) to its next
one (after $t_{i+1}\cdot G$).
\end{proof}

An immediate corollary of Property~\ref{property:supergeometric} is that
$t_{i+j}>t_{i}\cdot G^{\left\lfloor j/2\right\rfloor }$ for $j\ge0$;
in particular, $t_{i+j}>t_{i}\cdot G^{2}$ for $j\ge4$. Our next
observation says when we can get $t_{i+3}>t_{i}\cdot G^{2}$.
\begin{property_XXX}\label{property:supergeometric2}
Let $S=\left(T_{1},\ldots,T_{k}\right)$ be a snapshot of some $q$-efficient $k$-checkpoint algorithm $\textsc{Alg}=\left(t,p\right)$
such that $T_{j}=t_{i}$ and $T_{j+1}=t_{i+3}$ for some $j=1,\ldots,k-1$.
Thus, without loss of generality, $t_{i+3}>t_{i}\cdot G^2$.
\end{property_XXX}
\begin{proof}
Starting with a $q$-efficient algorithm that satisfies Property~\ref{property:supergeometric},
we modify it such that Property~\ref{property:supergeometric} holds
as well. As in the previous proof, for $j=1,2$, denote by $\ell_{j}$
the label of the checkpoint at $t_{i+j}$ and denote by $x_{j}$ the time
by which it is removed from the time interval $\left(t_{i},t_{i+3}\right)$.

We show that if $t_{i+3}\leq t_{i}\cdot G^{2}$ then we can delay
the update at time $t_{i+1}$ until $t_{i}\cdot G$ and eliminate
the update at time $t_{i+2}$. Note that $t_{i+1}\leq t_{i}\cdot G<t_{i+2}$
by Property~\ref{property:supergeometric}, hence after this change Property~\ref{property:supergeometric} continues to hold.

If $x_{1}<x_{2}$ we switch the roles of the two checkpoints by updating
$\ell_{2}$ directly at $x_{1}$ and $\ell_{1}$ at $t_{i}\cdot G$
and again at $x_{2}$; otherwise, $x_{2}<x_{1}$ and we simply delay
the update of $\ell_{1}$ to time $t_{i}\cdot G$ and skip the update
at $t_{i+2}$ by updating $\ell_{2}$ directly at $x_{2}$ from its
previous update.
\end{proof}


\section{\label{sec:small-k}Optimal Algorithms for Small Values of $k$}

\subsection{Round-robin and $k\le5$}

We now analyze the efficiency of the \textsc{Round-Robin} algorithm,
which is geometric and always updates the oldest checkpoint (i.e.,
$p_{i}=1$ for all $i\ge k+1$).\footnote{The case $k=3$ of \textsc{Round-Robin} was considered in~\cite[Theorem~1]{BDNS13}
under the name \textsc{Simple}.} Besides serving as a first example, \textsc{Round-Robin} is optimal
for $k\le3$ and will make an appearance within the asymptotically
optimal algorithm \textsc{Recursive} of Section~\ref{sec:recursive-construction}.
\begin{proposition}\label{prop:RR}
The efficiency of $k$-checkpoint \textsc{Round-Robin}
is $q=\left(k+1\right)r$, where~$r$ is the smallest real root of
$x=\left(1-x\right)^{k-1}$.
\end{proposition}
\begin{proof}
Denote by $t=\left(t_{i}\right)_{i=1}^{\infty}$ the sequence of
update times. A snapshot of \textsc{Round-Robin} is $S_{i+k}=\left(t_{i+1},t_{i+2},\ldots,t_{i+k}\right)$. For the
algorithm to be $\lambda\left(k+1\right)$-efficient for some $\lambda\ge1$,
update times need to satisfy $t_{i+1}\le\lambda t_{i+k}$ as well
as the subgeometric conditions
\(
t_{i+k}\le t_{i+k-1}/\left(1-\lambda\right)\le\cdots\le\left(1-\lambda\right)^{1-k}t_{i+1},
\)
implying that $\lambda\le\left(1-\lambda\right)^{1-k}$. This is possible
if and only if $\lambda\ge r$. Moreover, for any $G>1$, the efficiency
of \textsc{Round-Robin} using a geometric update times sequence $\left( t_{i}=G^{i}\right) _{i=1}^{\infty}$
is $\max\left\{ 1-1/G,G^{1-k}\right\} \left(k+1\right)$. This is
indeed minimized by choosing $G=1/\left(1-r\right)$.
\end{proof}
\begin{remark*}
\textsc{Round-Robin} is pretty bad for large $k$; 
indeed, $\left(k+1\right)r\approx\ln k-\ln\ln k$ is asymptotically inferior
to the simple bound $q_{k}\le2$ from the introduction.
\end{remark*}
The case $k=2$ is made obvious by Property~\ref{property:never-update-k},
since without loss of generality \textsc{Round-Robin} is the \emph{only} 2-checkpoint algorithm to consider.
Thus $q_2=1.5$.
\begin{proposition}
For $k=3$ we have $q_{3}=4r_{3}\approx1.52786$, where $r_{3}=\frac{3-\sqrt{5}}{2}\approx0.38197$
is the smaller root of $x^{2}-3x+1=0$.
\end{proposition}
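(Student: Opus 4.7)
The plan is to prove the value of $c_3$ by matching upper and lower bounds, both of which fall out of the structure established in Section~\ref{sec:basic}.

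For the upper bound $c_3 \le 3r_3$, I would simply invoke Proposition~\ref{prop:RR} with $k=3$. The defining equation $r=\left(1-r\right)^{k-1}$ specialises to $r=\left(1-r\right)^{2}$, i.e., $r^{2}-3r+1=0$, whose smaller root is exactly $r_{3}=\tfrac{3-\sqrt{5}}{2}$. The geometric round-robin scheme with ratio $q=1/\left(1-r_{3}\right)$ (which happens to be the golden ratio $\phi$) is therefore $3r_{3}$-efficient.

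For the lower bound $c_3 \ge 3r_3$, I would fix an arbitrary $c$-efficient $3$-backup scheme. Property~\ref{obs:never-update-k} lets me assume $d_{n}\in\{1,2\}$ for every update, and Property~\ref{obs:update-1-infinitely-often} gives an infinite subsequence of updates that hit device $1$. The key structural remark is that in every standard snapshot $S_{n}=\left(T_{1}^{n},T_{2}^{n},t_{n}\right)$ the middle coordinate equals the previous update time: updating device $1$ sends $\left(T_{1}^{n-1},T_{2}^{n-1},t_{n-1}\right)$ to $\left(T_{2}^{n-1},t_{n-1},t_{n}\right)$, and updating device $2$ sends it to $\left(T_{1}^{n-1},t_{n-1},t_{n}\right)$; both yield $T_{2}^{n}=t_{n-1}$. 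In particular, whenever device $1$ is updated at time $t_{n}$, the oldest backup is $T_{1}^{n}=T_{2}^{n-1}=t_{n-2}$.

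With that identity in hand, the rest is a two-line calculation that I expect to be entirely routine. The $c$-compliance of interval $1$ in $S_{n}$ gives $t_{n-2}\le ct_{n}/3$, while the subgeometric bound from Property~\ref{obs:subgeometric} iterated twice gives $t_{n}\le t_{n-2}/\left(1-c/3\right)^{2}$. Combining the two inequalities yields $\left(1-c/3\right)^{2}\le c/3$, i.e., with $u=c/3$, $u^{2}-3u+1\le0$, forcing $u\ge r_{3}$ and thus $c\ge 3r_{3}$. No genuine obstacle stands in the way; the only subtle point is identifying $T_{2}^{n}=t_{n-1}$, and that becomes automatic once Property~\ref{obs:never-update-k} is applied so that the last-updated device always lies in positions $1$ or $2$ of the snapshot.
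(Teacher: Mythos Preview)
Your proof is correct and follows essentially the same route as the paper: the upper bound via Proposition~\ref{prop:RR} and the lower bound by combining subgeometry with compliance of interval~1 at a suitable snapshot. Your version is in fact a bit more explicit than the paper's, which simply writes ``by subgeometry $z\le y/(1-r)\le x/(1-r)^2$'' for an unspecified snapshot $(x,y,z)$; your identification $T_2^n=t_{n-1}$ and the restriction to a step with $d_n=1$ (guaranteed by Property~\ref{obs:update-1-infinitely-often}) is exactly what justifies that second inequality.
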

\begin{proof}
For the upper bound, \textsc{Round-Robin} is $4r_3$-efficient.
Note that $G_3:=1/\left(1-r_3\right)=\frac{1+\sqrt{5}}{2}$ is the golden ratio.

For the lower bound, consider a $4r$-efficient $3$-checkpoint algorithm and 
a snapshot $S_{i}=\left(x,y,t_i\right)$.
By subgeometry we must have $t_i\le y/\left(1-r\right)\le x/\left(1-r\right)^2$
and for subinterval 1 to be compliant we need $x\le rt_i$, which
together imply $\left(1-r\right)^2\le r$, i.e., $r^2-3r+1\le0$.
Thus $r\ge r_3$.
\end{proof}

\textsc{Round-Robin} is no longer optimal for $k>3$. Indeed, cyclic
algorithms with better efficiency were described in~\cite[Figure~3]{BDNS13}
for $k=4,5,6,7,8$. These provide upper bounds on $q_{4},\ldots,q_{8}$,
respectively. Nevertheless, no formal proof of optimality was provided.
\begin{remark*}
These algorithms were found by the use of linear programming, which
is thoroughly discussed in Section~\ref{sec:LP}.
\end{remark*}

For $k=4,5$ the optimal algorithms are $2$-cyclic; $k=5$ is geometric while $k=4$ is not.

\begin{proposition}
\label{prop:k5}For $k=5$ we have $q_{5}=6r_{5}\approx1.47073$,
where $r_{5}\approx0.24512$ is the (only) real root of $x^{3}-4x^{2}+5x-1=0$.
\end{proposition}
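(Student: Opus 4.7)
I would exhibit a $3$-periodic $q$-geometric scheme with $q = 1/(1-r_5)$. Substituting $s = 1-r$ in the defining cubic $r^3-4r^2+5r-1=0$ yields $s^3+s^2=1$, equivalently $q^3 = q+1$. Take update times $t_n = q^n$, device sequence $D=(2,3,1)$ repeated indefinitely, and initial snapshot $S_0 = (q^{-5}, q^{-3}, q^{-2}, q^{-1}, 1)$. A direct three-step simulation gives $S_3 = (q^{-2}, 1, q, q^2, q^3) = q^3 \cdot S_0$, so the scheme is genuinely $3$-periodic. By Property~\ref{obs:changed-interval-only} it suffices to verify the three newly merged intervals arising in $S_1$, $S_2$, $S_3$: their lengths are $q^{-4}$, $(q^2-1)/q^2$, and $q^{-2}$ respectively, and the three compliance inequalities $q^{-4} \le rq$, $(q^2-1)/q^2 \le rq^2$, $q^{-2} \le rq^3$ all simplify (via $q^3=q+1$) to $rq^5 \ge 1$, which holds with equality at $r = r_5$. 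Hence $c_5 \le 5r_5$.

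\textbf{Lower bound.} Consider an arbitrary $5r$-efficient $5$-backup scheme. By Properties~\ref{obs:never-update-k}--\ref{obs:supergeometric2} combined with Remark~\ref{rem:rebase}, I may assume WLOG that position $5$ is never updated, position $1$ is updated infinitely often, the update times satisfy $t_{n+2} > qt_n$ (and $t_{n+3} > q^2 t_n$ at snapshots to which Property~\ref{obs:supergeometric2} applies) with $q = 1/(1-r)$, and every entry of every standard snapshot is an update time. By Property~\ref{obs:update-1-infinitely-often} I can fix a snapshot $S_n = (t_{b_1}, t_{b_2}, t_{b_3}, t_{b_4}, t_n)$ immediately preceding a position-$1$ update at $t_{n+1}$. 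The merged interval created in $S_{n+1}$ has length $t_{b_2}$, so compliance yields $t_{b_2} \le r t_{n+1}$, while subgeometry yields $t_{n+1} \le q^{n+1-b_2} t_{b_2}$; together they give $r \ge (1-r)^{n+1-b_2}$. Since $b_2 \le n-3$, we always have $n+1-b_2 \ge 4$, and whenever $n+1-b_2 \le 5$ the bound $r \ge (1-r)^5$ is equivalent to $r \ge r_5$ via the factorization $s^5 + s - 1 = (s^3 + s^2 - 1)(s^2 - s + 1)$ whose second factor has no real roots.

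\textbf{Main obstacle.} The only scenario left is $n+1-b_2 \ge 6$ at \emph{every} pre-position-$1$-update snapshot, i.e., the second-oldest entry is always at least $5$ updates old. To rule this out I would analyze the gap pattern $(g_2,g_3,g_4) := (b_3-b_2, b_4-b_3, n-b_4)$, which sums to at least $5$. Whenever some $g_i \ge 3$, Property~\ref{obs:supergeometric2} gives $t_{b_{i+1}} > q^2 t_{b_i}$, and combining this strict ratio with the compliance bound $T_{i+1} - T_i \le r t_n$ (and with interval-$1$ compliance $t_{b_1} \le r t_n$) forces a refined lower bound on $t_{n+1}/t_{b_2}$ that together with $t_{b_2} \le r t_{n+1}$ still yields $r \ge r_5$. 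The residual subcase with all gaps $\le 2$ (summing to $\ge 5$ forces at least two gaps equal to $2$) is treated analogously using Property~\ref{obs:supergeometric} to sharpen the corresponding ratios beyond $q$. In every subcase the algebra reduces back to $s^3 + s^2 = 1$, matching the cubic exactly; keeping this residual enumeration both exhaustive and tight is the main challenge, after which $c_5 \ge 5r_5$ follows.
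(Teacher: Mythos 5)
Your upper bound is correct and slightly different from the paper's: you use a $3$-periodic $q$-geometric scheme with device sequence $D=(2,3,1)$, whereas the paper uses a $(q,2)$-periodic scheme with $D=(3,1)$ and the same $q$ (the root of $q^3=q+1$). Both are legitimate $5r_5$-efficient constructions; your algebra (the substitution $s=1-r$ giving $s^3+s^2=1$, the merged-interval lengths, and the reduction of all three compliance checks to $rq^5\ge 1$) checks out. Note also that the paper's normalization $S_0=(1,q^2,q^3,q^4,q^5)$ is exactly $q^5$ times your $S_0$, so the two schemes start from the same snapshot but diverge at the first step.

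The lower bound, however, has a genuine gap that you yourself flag. Your argument derives $r\ge(1-r)^{n+1-b_2}$ from compliance of the single merged interval created by the position-$1$ update, and this only yields $r\ge r_5$ when $n+1-b_2\le 5$. You correctly observe that the remaining case $n+1-b_2\ge 6$ must be ruled out or handled separately, and the ``main obstacle'' paragraph is a sketch of a gap-pattern case analysis that is never completed --- you explicitly write that ``keeping this residual enumeration both exhaustive and tight is the main challenge.'' As it stands this is a plan, not a proof; it is not clear a priori that the subcases close, since when $b_2$ is far from $n$ the subgeometry bound $t_{n+1}\le q^{n+1-b_2}t_{b_2}$ is loose and the single compliance constraint $t_{b_2}\le rt_{n+1}$ alone is too weak.

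The paper's lower-bound argument sidesteps this entirely by looking \emph{forward} rather than backward: since a $5r$-efficient scheme cannot be round-robin (RR forces $r\ge(1-r)^4$, i.e.\ $r>0.275$), there is a snapshot $S_n=(x,y,z,u,v)$ followed by a position-$1$ update $(1,w)$ and then a position-$d$ update $(d,t)$ with $d\in\{2,3,4\}$. Decomposing $v$ (or $w$) as a telescoping sum of interval lengths drawn from $S_n$, $S_{n+1}$, $S_{n+2}$, each bounded by compliance, and substituting the subgeometry inequalities $v\ge(1-r)w\ge(1-r)^2t$ yields $(1-r)^3\le r(2-r)$, equivalent to $r^3-4r^2+5r-1\ge 0$, for $d\in\{2,3\}$ (the case $d=4$ gives an even stronger constraint). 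This three-snapshot argument is short, exhaustive, and avoids any discussion of how old the second-oldest backup is. You should replace your ``main obstacle'' sketch with an argument of this flavour: analyze the \emph{next} update after the position-$1$ update rather than trying to tighten the subgeometric chain reaching back to $b_2$.
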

\begin{proof}
 For the upper bound, set $G=1/\left(1-r_{5}\right)$ and consider
the $\left(2,G\right)$-geometric algorithm with pattern $P=\left(3,1\right)$
and initial snapshot $S_{5}=\left(1,G^{2},G^{3},G^{4},G^{5}\right)$.
The algorithm is indeed $\left(2,G^{2}\right)$-cyclic, as $S_{6}=\left(1,G^{2},G^{4},G^{5},G^{6}\right)$
and $S_{7}=\left(G^{2},G^{4},G^{5},G^{6},G^{7}\right)=G^{2}S_{5}$.
Due to being geometric, we just need to verify compliance of subinterval
3 in $S_{5}$: $G^{4}-G^{2}\le r_{5}G^{6}$ and subinterval 1 in $S_{6}$:
$G^{2}\le r_{5}G^{7}$. Altogether it remains to show that
\[
r_{5}\ge\max\left\{ G^{-5},G^{-2}-G^{-4}\right\} =\max\left\{ \left(1-r_{5}\right)^{5},\left(1-r_{5}\right)^{2}-\left(1-r_{5}\right)^{4}\right\} ,
\]
both of which are equal to $r_{5}$.

For the lower bound, consider a $6r$-efficient $5$-checkpoint algorithm.
It cannot be \textsc{Round-Robin}, which satisfies $\left(1-r\right)^{4}\le r$
and in particular $r>0.275>r_{5}$. We thus pick a snapshot $S_{i}=\left(x,y,z,w,t_{i}\right)$
followed by update steps $p_{i+1}=1$ and $p_{i+2}\in\left\{ 2,3,4\right\} $.
Thus $S_{i+2}$ is one of $\left(y,w,t_{i},t_{i+1},t_{i+2}\right)$,
$\left(y,z,t_{i},t_{i+1},t_{i+2}\right)$, or $\left(y,z,w,t_{i+1},t_{i+2}\right)$.
In either case, compliance of $S_{i}$, $S_{i+1}$ and $S_{i+2}$
implies $t_{i}\le r\left(t_{i}+t_{i+1}+t_{i+2}\right)$, which together
with $t_{i}\le\left(1-r\right)t_{i+1}\le\left(1-r\right)^{2}t_{i+2}$
yields $(\left(1-r\right)^{2}-r)\left(1-r\right)\le r$,
i.e., $r^{3}-4r^{2}+5r-1\ge0$. Thus $r\ge r_{5}$.
\end{proof}

\begin{proposition}
\label{prop:k4}For $k=4$ we have $q_{4}=5r_{4}\approx1.53989$,
where $r_{4}=\left(2+2\cos\left(2\pi/7\right)\right)^{-1}\approx0.307979$
is the smallest root of $x^{3}-5x^{2}+6x-1=0$. Moreover, the efficiency
of any geometric 4-checkpoint algorithm is at least $5\tilde{r}_{4}\approx1.58836$,
where $\tilde{r}_{4}\approx0.31767>r_{4}$ is the real root of $x^{3}-3x^{2}+4x-1=0$.
\end{proposition}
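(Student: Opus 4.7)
The plan is to establish $c_4 = 4r_4$ by matching upper and lower bounds, and to separately prove the stronger lower bound $4\tilde r_4$ valid only against geometric schemes. Since the minimal polynomial of $r_4$ has degree $3$ and the abstract advertises a ratio $2/(1+\cos(2\pi/7))$ between online and offline efficiency, I would look for a non-geometric periodic construction, mirroring the $k=5$ case in Proposition~\ref{prop:k5} but with a longer period.

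For the upper bound, I would exhibit an explicit $(q,m)$-periodic $4$-backup scheme with $q = 1/(1-r_4)$, most plausibly with period $m \in \{2,3\}$. With $S_0$ normalized so that $T_4 = 1$, a short finite search over admissible device tuples $D = (d_1,\ldots,d_m)$ (respecting Properties~\ref{obs:never-update-k} and~\ref{obs:update-1-infinitely-often}) produces a candidate whose successive transitions, as described in Figure~\ref{fig:Transition}, yield $S_m = q^m S_0$. By Property~\ref{obs:changed-interval-only}, efficiency reduces to verifying compliance of interval $d_i$ in each $S_i$ for $i=1,\ldots,m$, which becomes a finite list of polynomial inequalities in $r_4$; each is implied by the defining cubic $x^3 - 5x^2 + 6x - 1 = 0$ (or its correct form obtained from $2\cos(2\pi/7)$).

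For the general lower bound, I would follow the template of Proposition~\ref{prop:k5}: fix a $4r$-efficient scheme, rebase via Remark~\ref{rem:rebase}, and apply Properties~\ref{obs:supergeometric} and~\ref{obs:supergeometric2} to pin down the update-time ratios. Writing a standard snapshot as $S_n = (x,y,z,v)$ with $v = t_n$, I would enumerate the possible next updates $d_{n+1},d_{n+2},d_{n+3} \in \{1,2,3\}$ (Property~\ref{obs:never-update-k}) and, for each pattern, combine (i) $r$-compliance of every intermediate standard snapshot, (ii) the subgeometric/supergeometric bounds tying $t_{n+i}$ to $t_n$, and (iii) the pebbling transition rule to eliminate intermediates. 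Each branch yields a polynomial inequality in $r$; the worst of them is $r^3 - 5r^2 + 6r - 1 \le 0$, forcing $r \ge r_4$. The main obstacle here is the casework: unlike $k=5$, a two-step look-ahead does not discriminate enough, and one needs a disciplined three- or four-step enumeration, using Property~\ref{obs:supergeometric2} to collapse the annoying intermediate cases where $t_{n+3} \le q^2 t_n$.

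For the geometric bound, I would exploit the fact that in a $q$-geometric scheme every entry of every rebased snapshot lies in $\{t_j\}_{j \ge 0}$, so each interval length is of the form $q^n - q^{n-a}$ for some integer gap $a \ge 1$. Compliance of interval $i$ in a snapshot with $T_k = t_n$ then translates into $1 - q^{-a_i} \le r$, and the transition rule forces $\sum a_i$-type constraints determined by the device sequence. A pigeonhole argument shows that some interval must persist through at least $k-1 = 3$ consecutive updates before being merged, making the binding constraint $(1-r)^{3} \le r$, exactly the RR constraint of Proposition~\ref{prop:RR}; hence $r \ge \tilde r_4$. Since $\tilde r_4 > r_4$, this both proves the geometric lower bound and certifies that the optimal scheme furnished in the first step must be non-geometric, as claimed.
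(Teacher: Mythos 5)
Your upper-bound and general lower-bound plans essentially track the paper's argument. The paper's construction is indeed a $\left(q,2\right)$-periodic non-geometric scheme with $D=\left(3,1\right)$, exactly the shape you predicted, and the lower bound is indeed a finite casework over the next few updates from a snapshot followed by update $\left(1,v\right)$ (the non-RR case), each branch yielding a cubic in $r$. One technical slip to fix: you set $q=1/\left(1-r_{4}\right)$, but that is the \emph{maximal consecutive ratio} allowed by Property~\ref{obs:subgeometric}, not the period-averaged $q$ of the $(q,2)$-periodic scheme. Since the scheme is non-geometric, the two consecutive ratios within a period differ (one of them equals $1/(1-r_{4})$, the other is smaller), and the period-averaged $q$ is $\sqrt{\alpha}$ with $\alpha=r_{4}^{-1/2}\approx1.8019$, giving $q\approx1.3424$. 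A search with $t_{n+2}=t_{n}/\left(1-r_{4}\right)^{2}$ hard-wired would not find the construction. You also correctly flagged that the quoted cubic $x^{3}-5x^{2}+6x-1$ does not have $\left(2+2\cos\left(2\pi/7\right)\right)^{-1}$ as a root; the proof actually derives $r^{3}-6r^{2}+5r-1\ge0$, which is the right minimal polynomial for $r_{4}$.

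The genuine gap is in your geometric lower bound. You claim that in a geometric scheme compliance of interval $i$ ``translates into $1-q^{-a_{i}}\le r$,'' but this is only correct for $i=k$. For $i<k$ the interval $\left(T_{i-1},T_{i}\right)$ sits strictly to the left of $T_{k}$, so the normalized length is $q^{n_{i}-N}\bigl(1-q^{-a_{i}}\bigr)$ with $n_{i}<N$, i.e.\ a factor strictly less than one multiplies your bound. Likewise interval $1$ has length $T_{1}$ (not $q^{n}-q^{n-a}$). Because of these factors your constraints are too strong, and the asserted pigeonhole --- that ``some interval must persist through at least $k-1$ consecutive updates,'' forcing exactly the RR constraint $\left(1-r\right)^{3}\le r$ --- does not follow; persistence of an interval is actually harmless (the binding check is at creation time, per Property~\ref{obs:changed-interval-only}), and the correct obstruction is elsewhere. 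The paper obtains the geometric bound much more economically: it simply revisits the \emph{last} branch of the general casework, observes that geometry forces the slack inequality $v\le\left(1-r\right)^{-1}w$ used there to become the equality $v=\left(1-r\right)w$, and substitutes to upgrade $r^{3}-6r^{2}+5r-1\ge0$ into $\left(1-r\right)^{3}\le r$, which is the cubic defining $\tilde r_{4}$. You should replace the pigeonhole step with this ``rerun the worst branch with equality'' argument.
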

\begin{proof}
 For the upper bound, let $\gamma=1/\sqrt{r_{4}}$ be the largest
root of $x^{3}-x^{2}-2x+1=0$ and consider the $\left(2,\gamma\right)$-cyclic
algorithm with pattern $P=\left(3,1\right)$ starting at $S_{4}=\left(1,\gamma,\gamma^{3}-\gamma^{2},\gamma^{2}\right)$.
The algorithm is indeed $\left(2,\gamma\right)$-cyclic, as $S_{5}=\left(1,\gamma,\gamma^{2},\gamma^{4}-\gamma^{3}\right)$
and $S_{6}=\left(\gamma,\gamma^{2},\gamma^{4}-\gamma^{3},\gamma^{3}\right)=\gamma S_{4}$.
The update times sequence is $1/\left(1-r_{4}\right)$-subgeometric
if $\left(1-r_{4}\right)\gamma^{3}\le\gamma^{4}-\gamma^{3}$ and $\left(1-r_{4}\right)\left(\gamma^{3}-\gamma^{2}\right)\le\gamma^{2}$;
furthermore we need to verify compliance of subinterval 3 in $S_{5}$:
$\gamma^{2}-\gamma\le r_{4}\left(\gamma^{4}-\gamma^{3}\right)$ and
subinterval 1 in $S_{6}$: $\gamma\le r_{4}\gamma^{3}$. Altogether
it remains to show
\[
r_{4}\ge\max\left\{ 2-\gamma,1-\frac{1}{\gamma^{2}-\gamma},\frac{\gamma^{2}-\gamma}{\gamma^{4}-\gamma^{3}},\gamma^{-2}\right\} .
\]
Indeed the last three are equal to $r_{4}$ while the first is
\[
2-\gamma=\gamma^{-1}\left(1-\left(\gamma-1\right)^{2}\right)<\gamma^{-1}\left(1-\left(\gamma-1\right)\gamma^{-1}\right)=\gamma^{-2}=r_{4},
\]
using $\gamma-1>\gamma^{-1}$ since $r_{4}+\sqrt{r_{4}}<r_{3}+\sqrt{r_{3}}=1$.

For the lower bound, consider a $5r$-efficient 4-checkpoint algorithm.
If it is \textsc{Round-Robin}, it satisfies $\left(1-r\right)^{3}\le r$
so $r\ge\tilde{r}_{4}$. Otherwise pick a snapshot $S_{i}=\left(x,y,z,w\right)$
followed by update steps $p_{i+1}=1$ and $p_{i+2}\in\left\{ 2,3\right\} $.
The snapshot $S_{i+2}$ is $\left(y,w,t_{i+1},t_{i+2}\right)$ or
$\left(y,z,t_{i+1},t_{i+2}\right)$. The snapshot $S_{i+3}$ after
the next update step $p_{i+3}$ can be one of four:
\begin{itemize}
\item If $S_{i+3}=\left(z,t_{i+1},t_{i+2},t_{i+3}\right)$ then $t_{i+1}=\left(t_{i+1}-z\right)+z\le r\left(t_{i+2}+t_{i+3}\right)$
by compliance of $S_{i+2}=\left(y,z,t_{i+1},t_{i+2}\right)$ and $S_{i+3}$,
yielding $\left(1-r\right)^{3}\le r^{2}$, i.e., $r>0.43016>\tilde{r}_{4}>r_{4}$;
\item If $S_{i+3}$ is $\left(y,t_{i+1},t_{i+2},t_{i+3}\right)$ or $\left(w,t_{i+1},t_{i+2},t_{i+3}\right)$
then $t_{i+1}=y+\left(t_{i+1}-y\right)\le r\left(t_{i+1}+t_{i+3}\right)$
by compliance of $S_{i+1}$ and $S_{i+3}$, or $w\le rt_{i+3}$ by
compliance of $S_{i+3}$; either way yields $\left(1-r\right)^{3}\le r$,
which again implies $r\ge\tilde{r}_{4}$;
\item The only remaining option is $S_{i+3}=\left(y,w,t_{i+2},t_{i+3}\right)$,
which means $p_{i+2}=3$, i.e., $S_{i+2}=\left(y,w,t_{i+1},t_{i+2}\right)$.
Now
\begin{gather*}
\left(1-r\right)t_{i+1}\le w=y+\left(w-y\right)\le r\left(t_{i+1}+t_{i+2}\right)\\
t_{i+2}=y+\left(w-y\right)+\left(t_{i+2}-w\right)\le r\left(t_{i+1}+t_{i+2}+t_{i+3}\right)
\end{gather*}
by compliance of all four snapshots, so
\begin{gather*}
\left(1-2r\right)t_{i+1}\le rt_{i+2}\\
\left(1-r\right)^{2}t_{i+2}\le\left(1-r\right)r\left(t_{i+1}+t_{i+3}\right)\le\left(1-r\right)rt_{i+1}+rt_{i+2}
\end{gather*}
hence $\left(1-2r\right)\left(\left(1-r\right)^{2}-r\right)\le r^{2}\left(1-r\right)$,
i.e., $r^{3}-6r^{2}+5r-1\ge0$, which implies $r\ge r_{4}$.
\end{itemize}
Note that a $1/\left(1-r\right)$-geometric update times sequence
would satisfy $\left(1-r\right)t_{i+2}=t_{i+1}$ in the last case,
yielding $\left(1-r\right)^{3}\le r$ one last time and implying $r\ge\tilde{r}_{4}$.
\end{proof}


\subsection{\label{sec:LP}Casting the problem as a linear program}

Fix $\lambda\ge1$ and an update pattern $p=\left(p_{i}\right)_{i=k+1}^{\infty}$.
Can we choose a sequence $t=\left(t_{i}\right)_{i=1}^{\infty}$ of
update times such that the resulting $k$-checkpoint algorithm $\textsc{Alg}=\left(t,p\right)$
is $\lambda$-efficient?

Each snapshot $S_{i}$ consists of a particular subset of the variables
$t$, and using $p$ we can determine exactly which. Furthermore,
all constraints (e.g., monotonicity, subgeometry, compliance) can
be expressed as linear inequalities. This gives rise to an infinite
linear program $L=L\left(\lambda;p\right)$, which is feasible whenever
a $\lambda$-efficient algorithm with the prescribed pattern $p$
exists. Note that all constraints are homogeneous, so to avoid the
zero solution we add the non-homogeneous condition $t_{k}=1$.

In addition, we are not interested in solutions where $t$ is bounded.
This can happen, for instance, when $p_{i}=k-1$ for all $i\ge k+1$.\footnote{%
This may not seem a valid pattern to consider, given Property~\ref{property:update-1-infinitely-often};
however, when solving a finite subprogram we might have to consider
an arbitrarily long prefix of the pattern with no occurrences of 1.} Luckily, by using Property~\ref{property:supergeometric} we can restrict
our attention to exponentially increasing sequences $t$, so we add to $L$
the linear inequalities from Properties~\ref{property:supergeometric} and~\ref{property:supergeometric2}.
Now $L$ is feasible if and only if a $\lambda$-efficient algorithm with the prescribed pattern $p$ exists;
in other words, $q_k$ is the infimum\footnote{%
This infimum is actually a minimum, by~\cite[Theorem~8]{BDNS13} and also by Proposition~\ref{prop:LP-discrete-regimes}.} over $\lambda\ge1$
for which there exists a pattern~$p$ such that $L\left(\lambda;p\right)$
is feasible.

As an infinite program, $L$ is not too convenient to work with. We
can thus limit our attention to finite subprograms $L\left(\lambda;\left(p_{k+1}, \ldots, p_{k+n}\right)\right)$
for some $n\in\mathbb{N}$, which only involve the $k+n$ variables
$t_1, \ldots,t_{k+n}$ and the relevant $3k+6n$ constraints.
Finite subprograms can no longer ensure the existence of a $\lambda$-efficient
algorithm, but can be used to prove lower bounds on $q_{k}$ in the
following way. Write $\Sigma=\left\{ 1,\ldots,k-1\right\} $ and consider
the set $\Sigma^{*}$ of strings, i.e, finite sequences over $\Sigma$.
\begin{defn*}
A string $B\in\Sigma^{*}$ is called a {\em $\lambda$-witness} if $L\left(\lambda;B\right)$
is infeasible.
A string set $\mathcal{B}\subset\Sigma^{*}$ is called {\em blocking} if
any infinite sequence $p$ over $\Sigma$ contains some $B\in\mathcal{B}$
as a substring.
\end{defn*}
\begin{property_XXX}\label{property:LB-via-LP}
If there exists a blocking set of $\lambda$-witnesses
for some $\lambda\ge1$, then $q_{k}>\lambda$.
\end{property_XXX}
\begin{remark*}
The lower bound of Property~\ref{property:LB-via-LP} holds for all algorithms, cyclic or not.
\end{remark*}

We now describe a strategy to approximate $q_{k}$ to arbitrary precision.
For the lower bound we use Property~\ref{property:LB-via-LP}; for the upper bound, we limit our
focus to cyclic algorithms. Given $\gamma>1$ and a string $P\in\Sigma^{*}$
of length $n$, we can augment $L\left(\lambda;P\right)$ with $k$
equality constraints $\left\{ t_{i+n}=\gamma\cdot t_{i}\right\} _{i=1}^{k}$;
call the resulting program $L^{*}\left(\lambda,\gamma;P\right)$.
This is a finite linear program, which we can computationally solve
given $\lambda$, $\gamma$, and $P$. Although $\gamma\le G^{n}$
is not known to us, we can first compute an approximation~$\tilde{\gamma}$
of~$\gamma$ by solving $L_{10n}\left(\gamma;P^{10}\right)$, and
then solve $L^{*}\left(\lambda,\tilde{\gamma};P\right)$. Using binary
search, we can compute a numerical approximation $\tilde{\lambda}$
of the minimal $\lambda$ for which $L^{*}\left(\lambda,\tilde{\gamma};P\right)$
is feasible. Lastly, we can enumerate short strings $P\in\Sigma^{*}$
in a BFS/DFS-esque manner and take the best $\tilde{\lambda}$ obtained.

To demonstrate this strategy, we computed $q_{2},\ldots,q_{10}$ up
to 7 decimal digits, using a Python program employing GLPK~\cite{GLPK} via CVXOPT~\cite{CVXOPT} 
(see Table~\ref{tbl:comput-UB}; starred values of $k$ are geometric algorithms).
\begin{table}[h]
\centering
\begin{tabular}{|c|c|c|l|c|c|c|}
\hline
$k$  & $q_k$  & $\gamma$  & \multicolumn{1}{c|}{$P$ }  & $n$  & $\left|\mathcal{B}\right|$  & $\displaystyle\max_{B\in\mathcal{B}}\left|B\right|$\tabularnewline
\hline
\hline
2{*}  & 1.5  & 2 & $(1)$  & 1  & 0  & 0\tabularnewline\hline
3{*}  & 1.5278641  & 1.618037  & (1)  & 1  & 2  & 1\tabularnewline\hline
4  & 1.5398927  & 1.8019377  & (1,3)  & 2  & 7  & 3\tabularnewline\hline
5{*}  & 1.4707341  & 1.7548777  & (1,3)  & 2  & 36  & 13\tabularnewline\hline
6  & 1.5127400 & 3.627365  & (1,2,3,1,3,5)  & 6  & 117  & 9\tabularnewline\hline
7  & 1.4974818  & 3.11201  & (1,3,4,1,5,3)  & 6  & 559  & 10\tabularnewline\hline
8  & 1.4851548  & 10.712656  & (1,2,4,7,5,3,1,7,5,3,7,1,4,2,4,5)  & 16  & 1698  & 14\tabularnewline\hline
9  & 1.4730721  & 3.2748095  & (1,5,3,5,1,5,6,3)  & 8  & 5892  & 135\tabularnewline\hline
10  & 1.4678452  & 5.67943  & (1,5,3,5,1,5,6,3,1,5,9,3,5,9)  & 14  & 32843  & 20\tabularnewline\hline
\end{tabular}
\caption{\label{tbl:comput-UB}Computationally-verified bounds on $q_{k}$
for $2\le k\le10$.}
\end{table}

At first it seems that Property~\ref{property:LB-via-LP} cannot be used
to pinpoint $q_{k}$ exactly, since any finite blocking set $\mathcal{B}$
of $\left(q_k-\epsilon\right)$-witnesses for some $\epsilon>0$
leaves an interval of uncertainty of length~$\epsilon$.
The following proposition eliminates this uncertainly.
\begin{proposition}
\label{prop:LP-discrete-regimes}For every string $B\in\Sigma^{*}$
there is a finite set $\Lambda_{B}\subset\mathbb{R}$ such that the
feasibility of $L\left(\lambda;B\right)$ for some $\lambda\ge1$
only depends on the relative order between $\lambda$ and members
of~$\Lambda_{B}$. In particular, there exists some~$\epsilon>0$
such that if $L\left(\lambda;B\right)$ is feasible and~$B$ is a
$\left(\lambda-\epsilon\right)$-witness, then~$B$ is also a $\lambda'$-witness
for all $\lambda-\epsilon<\lambda'<\lambda$.
\end{proposition}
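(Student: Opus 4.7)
My plan is to view $L_{|D|}(c;D)$ as a parametric linear program in the single scalar $c$ and invoke the classical fact that the feasibility of such a program changes at only finitely many breakpoint values of the parameter. First, I would record that the variables of the program are $x=(T_1^0,\ldots,T_k^0,t_1,\ldots,t_{|D|})$ --- every $T_j^n$ with $n\ge 1$ is identified by the device sequence $D$ with some earlier $T_{j'}^0$ or $t_{n'}$ --- and that every constraint is linear in $x$ with coefficients \emph{affine} in $c$: monotonicity is independent of $c$; subgeometry reads $(1-c/k)t_{n+1}\le t_n$; the supergeometry of Property~\ref{obs:supergeometric} reads $(1-c/k)t_{n+2}\ge t_n$; $c$-compliance reads $T_j^n-T_{j-1}^n\le (c/k)T_k^n$; and the normalization is $T_k^0=1$. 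Thus the program has the form $A(c)x\le b(c)$, $E(c)x=f(c)$ with all entries of $A,b,E,f$ affine in $c$.

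The heart of the argument is a basis-by-basis analysis. By standard LP theory, feasibility at a given $c$ is equivalent to the existence of a basic feasible solution, which corresponds to a choice $B$ of as many tight constraints as there are variables. For each basis $B$ with $\det A_B(c)\neq 0$, Cramer's rule expresses the unique vertex $x_B^\ast(c)=A_B(c)^{-1}b_B(c)$ as a vector of rational functions of $c$ whose numerators and common denominator $\det A_B(c)$ are polynomials in $c$ of degree bounded by a constant depending on $k$ and $|D|$. Substituting $x_B^\ast(c)$ into each non-basis constraint and clearing $\det A_B(c)$ yields a polynomial inequality in $c$. Consequently, feasibility of $L_{|D|}(c;D)$ is encoded as a Boolean combination of sign conditions on a \emph{finite} collection of univariate polynomials in $c$.

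Defining $C_D\subset\mathbb{R}$ to be the set of real roots of all these polynomials (including the determinants $\det A_B(c)$ of every basis) then does the job: on each open interval of $\mathbb{R}\setminus C_D$ every relevant polynomial has constant sign, so feasibility of $L_{|D|}(c;D)$ is constant there and hence depends only on the relative order of $c$ with respect to $C_D$. For the second claim I would pick $\epsilon>0$ small enough that $[c-\epsilon,c)\cap C_D=\emptyset$; then feasibility is constant on $[c-\epsilon,c)$, so since $L_{|D|}(c-\epsilon;D)$ is infeasible it remains infeasible for every $c'\in(c-\epsilon,c)$. The only real obstacle I anticipate is the bookkeeping surrounding degenerate bases (those with $\det A_B(c)\equiv 0$ or with isolated zeros of $\det A_B$) and keeping track of sign flips when clearing the Cramer denominator --- but each of these contributes only finitely many additional breakpoints, which can be absorbed into $C_D$ without affecting the conclusion.
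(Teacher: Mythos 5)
Your proof is correct and reaches the same conclusion as the paper's, but it is more careful about the mechanism. The paper argues informally that ``decreasing $c$ shrinks $\mathcal{P}_{D}\left(c\right)$'' until it collapses to a vertex at some critical $c_{D}$; this monotonicity picture is not literally accurate, since the supergeometry constraints $\left(1-c/k\right)t_{n+2}\ge t_{n}$ actually \emph{loosen} as $c$ decreases, so the polytope need not shrink monotonically, and feasibility could in principle switch on and off several times. The paper nonetheless lands on the right definition of $C_{D}$ (all roots of all polynomial equations arising from subsets of tight constraints), so its conclusion is unharmed. Your basis-by-basis Cramer's-rule argument sidesteps the monotonicity issue entirely: you reduce feasibility to a Boolean combination of sign conditions on finitely many univariate polynomials in $c$, observe that each such condition can only change sign at a root, and take $C_{D}$ to be all those roots (including the determinant polynomials, which correctly absorbs the degenerate bases you flag). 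One minor point worth making explicit: your appeal to basic feasible solutions requires the polytope to be pointed (e.g.\ bounded), which holds here because $T_{k}^{0}=1$, the $T_{j}^{0}$ are sandwiched in $[0,1]$, and the subgeometry inequalities bound $t_{n}$ above by $\left(1-c/k\right)^{-n}$. With that noted, your argument is a rigorous rendering of the same idea the paper gestures at.
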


\begin{proof}
Fix $B\in\Sigma^{*}$. Treating $\lambda$ as a parameter, note that
the subprogram $L\left(\lambda;B\right)$ is feasible if and only
if its feasible region, the convex polytope $\mathcal{P}\left(\lambda;B\right)$,
is nonempty. Decreasing $\lambda$ shrinks $\mathcal{P}\left(\lambda;B\right)$
until some critical $\lambda_{B}$ for which $\mathcal{P}\left(\lambda_{B};B\right)$
is reduced to a single vertex, at which a subset of the linear constraints
are satisfied with equality. Hence $\lambda_{B}$ is a solution of
some polynomial equation determined by the relevant constraints. The
set of constraints is finite, thus there are finitely many polynomial
equations that can define $\lambda_{B}$, and we can take $\Lambda_{B}$
as the set of all their roots. Now take $\epsilon$ to be smaller
than the distance between any two distinct elements of $\Lambda_{B}$.
\end{proof}
\begin{remark*}
Note that when $\epsilon$ is small enough, we can actually retrieve
the polynomial equations defining $\lambda$ and $\gamma$ from the
polytope $\mathcal{P}^{*}\big(\tilde{\lambda},\tilde{\gamma};B\big)$;
using this method we get an algebraic representation of $q_{k}$ rather
than a rational approximation. To demonstrate, $q_{9}=10r_{9}$, where
$r_{9}\approx0.1473072131$ is the smallest real root of
\[
x^{8}-7x^{7}+22x^{6}-40x^{5}+39x^{4}-17x^{3}+10x^{2}-8x+1.
\]
\end{remark*}


\section{\label{sec:recursive-construction}Asymptotically Optimal Upper Bounds}

In this section we describe a family of geometric $k$-checkpoint algorithms.
Despite our experience from Table~\ref{tbl:comput-UB}---that only for $k=2,3,5$ optimal algorithms are geometric---this family is rich enough to be asymptotically optimal,
i.e., $\left(1+o\left(1\right)\right)q_{k}$-efficient.


\subsection{A recursive geometric algorithm}

Fix a real number $G>1$ and an integer $m\ge0$. We describe a $k$-checkpoint
algorithm \textsc{Recursive$\left(G,K\right)$}, where $K$ is an
$\left(m+2\right)$-subset $\left\{ 0,\ldots,k\right\} $ whose elements
are
\[
k=k_{0}>k_{1}>k_{2}>\cdots>k_{m}>k_{m+1}=0.
\]
\textsc{Recursive$\left(G,K\right)$} is $\left(2^{m},G\right)$-geometric,
and its update pattern $p$ is defined as $p_{k+i}=1+k_{\mu\left(i\right)+1}$,
where $\mu\left(i\right)$ is the largest $\mu\le m$ for which $2^{\mu}$
divides $i$. It is easy to see that $p$ is $2^{m}$-periodic, and
we can just refer to $P=\left(p_{k+i}\right)_{i=1}^{2^{m}}$. As per
Remark~\ref{rem:rebase}, via rebasing there is no need to explicitly define
the initial snapshot $S_{k}$.
\begin{example*}
For $K=\left\{ 0,2,4,9,19\right\} $ we get $P=(10,5,10,3,10,5,10,1)$.
\end{example*}
True to its name, \textsc{Recursive$\left(G,K\right)$} can be viewed
also as a recursive algorithm: the base case $m=0$ (i.e., $K=\left\{ 0,k\right\} $)
is simply $k$-checkpoint \textsc{Round-Robin}; for $m\ge1$, \textsc{Recursive$\left(G,K\right)$}
alternates between updating the $\left(k_{1}+1\right)$-st oldest
checkpoint and between acting according to the inner $k_{1}$-checkpoint
algorithm \textsc{Recursive$\left(G^{2},K\setminus\left\{ k\right\} \right)$}.

Let us elaborate a bit more on the recursive step. In every snapshot
$S_{i}=\left(T_{1},\ldots,T_{k}\right)$ we have $T_{j}=G^{i+j}$
for $k_{1}+1\le j\le k$ since we never update checkpoints younger
than $k_{1}+1$. In every \emph{odd} snapshot $S_{i}$ we have just
updated the $\left(k_{1}+1\right)$-st oldest checkpoint, so $T_{k_{1}}=G^{i+k_{1}-1}$
while $T_{k_{1}+1}=G^{i+k_{1}+1}$. This means that $\log_{G}T_{j}$
for $j=1,\ldots,k_{1}$ all have the same parity as $i+k_{1}-1$ in
any snapshot $S_{i}$. We thus treat $S'=\left(T_{1},\ldots,T_{k_{1}}\right)$
as a snapshot of a $k_{1}$-checkpoint algorithm, which operates at
half speed and never sees half of the checkpoints. The inner algorithm
can rightfully be called \textsc{Recursive$\left(G^{2},K\setminus\left\{ k\right\} \right)$},
as the common ratio of the update times sequence for the checkpoints
that do make it to the inner algorithm is $G^{2}$, and taking only
the even locations of $P$ yields a $2^{m-1}$ periodic sequence $p'$
such that $p_{k+i}'=p_{k+2i}=1+k_{\mu\left(2i\right)+1}=1+k_{\mu\left(i\right)+2}$.


\subsection{Analyzing the recursive algorithm}

First we determine exactly how efficient \textsc{Recursive$\left(G,K\right)$}
can be for any $G$ and $K$, and then we work with a particular choice.

Denote by $r\left(G,K\right)$ the maximum of%
\begin{subequations}
\begin{gather}
1-G^{-1};\label{eq:cond1}\\
\max\left\{ G^{-e\left(\ell\right)}\left(G^{2^{\ell}}-G^{-2^{\ell}}\right)\right\} _{\ell=0}^{m-1};\mbox{ and}\label{eq:cond2}\\
G^{2^{m}-e\left(m\right)},\label{eq:cond3}
\end{gather}
\end{subequations}%
where
\[
e\left(\ell\right)=\sum_{j=0}^{\ell}2^{j}\left(k_{j}-k_{j+1}\right)\text{ for }\ell=0,1,\ldots,m.
\]
\begin{theorem}
\label{thm:recursive-bound}Given $G$ and $K$, the efficiency of
\textsc{Recursive$\left(G,K\right)$} is $(k+1)r\left(G,K\right)$.
\end{theorem}
\begin{proof}
Write $r=r\left(G,K\right)$ and denote by $q=q\left(G,K\right)$
the efficiency of \textsc{Recursive$\left(G,K\right)$}. Our aim is
to show that $q=\left(k+1\right)r$; namely, that \textsc{Recursive$\left(G,K\right)$}
is $\left(k+1\right)r$-efficient, but is not $\tilde{q}$-efficient
for any $\tilde{q}<\left(k+1\right)r$.

We proceed by induction on~$m$. The base case $m=0$ is \textsc{Round-Robin}, whose efficiency
by the proof of Proposition~\ref{prop:RR} is $q=\left(k+1\right)\cdot\max\left(1-G^{-1},G^{1-k}\right)=\left(k+1\right)r$.
Assume now $m\ge1$ and consider a critical subinterval of length
$\delta=qT_{k}/\left(k+1\right)$ in a snapshot $S_{i}=\left(T_{1},\ldots,T_{k}\right)$,
taken at time $T_{k}=G^{i+k}$. This subinterval can be of one of three
types:
\begin{enumerate}
\item The last one, between $T_{k-1}$ and $T_{k}$. Here, by~\eqref{eq:cond1},
\[
\delta=T_{k}-T_{k-1}=\left(1-G^{-1}\right)T_{k}\le rT_k;
\]

\item One created by merging two smaller subintervals in an odd\footnote{When $i$ is odd (even) we call the snapshot $S_{i}$ odd
(even).} snapshot, between $T_{k_{1}}=G^{i+k_{1}-1}$ and $T_{k_{1}+1}=G^{i+k_{1}+1}$.
Here, by the case $\ell=0$ of \eqref{eq:cond2} 
\[
\delta=T_{k_{1}+1}-T_{k_{1}}=\left(G^{1}-G^{-1}\right)G^{i+k_{1}}=G^{k_{1}-k}\left(G^{1}-G^{-1}\right)T_{k}\le rT_k;
\]

\item One created by merging two smaller subintervals in an even
snapshot. This subinterval is among $S'=\left(T_{1},\ldots,T_{k_{1}}\right)$,
which we treat as a snapshot of \textsc{Recursive$\left(G^{2},K\setminus\left\{ k\right\} \right)$,}
taken at time $T_{k_{1}}=G^{i+k_{1}}$. Denote the efficiency of
\textsc{Recursive$\left(G^{2},K\setminus\left\{ k\right\} \right)$} by $q'=\left(k_{1}+1\right)r'$;
thus 
\[
\delta\le r'T_{k_{1}}=r'G^{i+k_{1}}=r'G^{k_{1}-k+i+k}=G^{e\left(0\right)}r'T_{k},
\]
with equality whenever this subinterval is critical for \textsc{Recursive$\left(G^{2},K\setminus\left\{ k\right\} \right)$}.
By the induction hypothesis, $r'=r\left(G^{2},K\setminus\left\{ k_{0}\right\} \right)$
is the maximum of 
\begin{gather*}
1-\left(G^{2}\right)^{-1};\\
\max\left\{ \left(G^{2}\right)^{-e'\left(\ell\right)}\left(\left(G^{2}\right)^{2^{\ell}}-\left(G^{2}\right)^{-2^{\ell}}\right)\right\} _{\ell=0}^{m-2};\mbox{ and}\\
\left(G^{2}\right)^{2^{m-1}-e'\left(m-1\right)},
\end{gather*}
where 
\[
e'\left(\ell\right)=\sum_{j=0}^{\ell}2^{j}\left(k_{j+1}-k_{j+2}\right)=\frac{1}{2}\left(e\left(\ell+1\right)-e\left(0\right)\right).
\]
To show $q\le\left(k+1\right)r$, it thus remains to show that $r\ge G^{e\left(0\right)}r'$.
\begin{enumerate}[(\itshape i)]
\item If $r'=1-G^{-2}<G\left(1-G^{-2}\right)=G^{1}-G^{-1}$ then $r>G^{e\left(0\right)}r'$
by case $\ell=0$ of~\eqref{eq:cond2};
\item If
\[
r'=\max\left\{ G^{-2e'\left(\ell\right)}\left(G^{2^{\ell+1}}-G^{-2^{\ell+1}}\right)\right\} _{\ell=0}^{m-2}
\]
then $r=G^{e\left(0\right)}r'$ by the remaining cases of~\eqref{eq:cond2};
\item If $r'=G^{2^{m}-2e'\left(m-1\right)}$ then $r=G^{e\left(0\right)}r'$
by~\eqref{eq:cond3}.
\end{enumerate}
\end{enumerate}
Now assume \textsc{Recursive$\left(G,K\right)$} is $\tilde{q}$-efficient for $\tilde{q}=\left(k+1\right)\tilde{r}$, where $\tilde{r}<r$, i.e., $\tilde{r}$ is strictly smaller than~\eqref{eq:cond1},~\eqref{eq:cond2} and~\eqref{eq:cond3}. The critical subinterval of length $\delta$ cannot be of type~1 or~2, due to~\eqref{eq:cond1} and case $\ell=0$ of~\eqref{eq:cond2}, respectively, so it must be of type 3. But then we get strict inequalities for the inner algorithm, so actually $r'<r\left(G^{2},K\setminus\left\{ k_{0}\right\} \right)$, contradicting the induction hypothesis.
\end{proof}

Given an integer $k\ge2$, let $m=\left\lfloor \log_{2}k\right\rfloor -1$.
Define $K^{*}=\left\{ k_{0},\ldots,k_{m+1}\right\} $ by $k_{j}=\left\lfloor 2^{-j}k\right\rfloor $
for $j=0,\ldots,m$ and $k_{m+1}=0$. Note that $k_{0}=k$ and that
$k_{m}\in\left\{ 2,3\right\} $.
\begin{theorem}
\label{thm:asymptotic-upper-bound}\textsc{Recursive$\left(G,K^{*}\right)$}
is $q$-efficient for large enough $k$, where $G=e^{q/\left(k+1\right)}$ and
\[
q=\left(1+\frac{3}{\log_{2}k}\right)\frac{k+1}{k}\ln4=\left(1+o\left(1\right)\right)\ln4.
\]
\end{theorem}

\begin{proof}
By Theorem~\ref{thm:recursive-bound}, it suffices to verify that
$(k+1)r\left(G,K^{*}\right)<q$, that is, $r\left(G,K^{*}\right)\le\ln G$
for sufficiently large $k$. Clearly~\eqref{eq:cond1} holds since
$1-G^{-1}<\ln G$ for all $G>1$. It remains to verify~\eqref{eq:cond2}
and~\eqref{eq:cond3}, handled by Propositions~\ref{prop:gamma-bound-case2}
and~\ref{prop:gamma-bound-case3} respectively.
\end{proof}
\begin{proposition}
\label{prop:gamma-bound-case2}For $k\ge2^{13}$ and $G$ as above,
$G^{-e\left(\ell\right)}\left(G^{2^{\ell}}-G^{-2^{\ell}}\right)<\ln G$
for all $\ell=0,\ldots,m-1$.
\end{proposition}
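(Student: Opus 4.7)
The plan is to reduce the desired inequality $q^{-e(\ell)}(q^{2^{\ell}}-q^{-2^{\ell}}) < \gamma$ to an elementary algebraic comparison and then verify it using the hypothesis $k\ge 2^{13}$. The first step is to trade the $\sinh$ for a linear-times-exponential bound: from $e^{-2u}\ge 1-2u$ one gets $2\sinh(u) = e^u-e^{-u} \le 2u e^u$ for $u \ge 0$, and setting $u=\gamma 2^\ell$ yields
\[
q^{2^\ell}-q^{-2^\ell} = 2\sinh(\gamma 2^\ell) \le 2\gamma\cdot 2^\ell\cdot q^{2^\ell}.
\]
Combined with $q^{-e(\ell)}=e^{-\gamma e(\ell)}$, the target reduces to $2\cdot 2^\ell\cdot q^{2^\ell-e(\ell)}\le 1$, i.e.\
\[
\gamma\bigl(e(\ell)-2^\ell\bigr) \;\ge\; (\ell+1)\ln 2.
\]

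Next I would obtain a closed-form handle on $e(\ell)$ by writing $k_i = 2^{-i}k - f_i$ with $f_i\in[0,1)$ and $f_0=0$. A short telescoping computation yields
\[
e(\ell) = (\ell+1)\frac{k}{2} + E(\ell),\qquad E(\ell) = 2^{\ell} f_{\ell+1} - \sum_{i=1}^{\ell} 2^{i-1} f_i,
\]
so that $|E(\ell)|<2^\ell$. Substituting $\gamma = 2(1+\epsilon)\ln 2/k$, $\epsilon=3/x$, and $k=2^x$ into the simplified target, and using $2^\ell - E(\ell) < 2^{\ell+1}$, the statement reduces to the elementary inequality
\[
3(\ell+1)\,2^{x-\ell} \;>\; 4\Bigl(1+\tfrac{3}{x}\Bigr)\,x,
\]
which must hold uniformly for $\ell=0,\ldots,t-1\le x-2$.

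Finally, I would verify this inequality. A direct ratio check shows that $3(\ell+1)2^{x-\ell}$ is nonincreasing in $\ell$ on the relevant range, so it suffices to test the binding case $\ell=t-1$: there the left side is at least $12(x-1)$ while the right side is $4x+12$, a comfortable gap once $x\ge 13$. The main obstacle is the bookkeeping around the floor-induced error $E(\ell)$, since the crude bound $|E(\ell)|<2^\ell$ is essentially tight for adversarial choices of the binary expansion of $k$. The hypothesis $k\ge 2^{13}$ is precisely what guarantees that the slack $\epsilon=3/x$ in $\gamma$ suffices to swallow this floor error together with the loss $e^{\gamma 2^\ell}$ (bounded by $e^{(1+\epsilon)\ln 2/2}$) introduced by the $\sinh$ bound.
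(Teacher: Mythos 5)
Your proof is correct, and it takes a genuinely different (and in fact sharper) elementary route than the paper. The paper writes $q^{2^{\ell}}-q^{-2^{\ell}}=q^{-2^{\ell}}\bigl(q^{2^{\ell+1}}-1\bigr)$ and bounds the parenthesis by $e^{y}-1<y/(1-y)$, which, after absorbing the bound on $e(\ell)$, leaves the inequality $8^{-(\ell+1)/x}\le 1-2^{\ell+1}\gamma$; this is handled by a somewhat delicate calculus argument (partial derivatives, concavity in $z$, endpoint checks) and is genuinely tight at $x=13$, which is why the paper needs $k\ge 2^{13}$. You instead write $q^{2^{\ell}}-q^{-2^{\ell}}=q^{2^{\ell}}\bigl(1-q^{-2^{\ell+1}}\bigr)\le 2\gamma 2^{\ell}q^{2^{\ell}}$ via $e^{-v}\ge 1-v$. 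The two upper bounds are directly comparable: with $u=\gamma 2^{\ell}$, the paper's bound is $e^{-u}\cdot 2u/(1-2u)$ while yours is $2ue^{u}$, and the ratio $e^{-2u}/(1-2u)\ge 1$ shows yours is uniformly at least as tight. Your decomposition also collapses the target to the clean polynomial-exponential inequality $3(\ell+1)2^{x-\ell}>4(1+3/x)x$, which your monotonicity-in-$\ell$ argument disposes of immediately; indeed $12(x-1)>4x+12$ already for $x>3$, so your argument actually proves the proposition for $k\ge 9$, strictly strengthening the stated hypothesis. (Your closing remark that $k\ge 2^{13}$ is "precisely what guarantees" the slack suffices slightly undersells your own bound, but that does not affect correctness: proving the conclusion on a superset of $k\ge 2^{13}$ is, of course, enough.) All the intermediate steps — the telescoping identity $e(\ell)=(\ell+1)k/2+E(\ell)$ with $|E(\ell)|<2^{\ell}$, the substitution $\gamma=2(1+\epsilon)\ln 2/k$, the reduction to $\gamma(e(\ell)-2^{\ell})\ge(\ell+1)\ln 2$, and the ratio check for monotonicity — are correct.
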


\begin{proposition}
\label{prop:gamma-bound-case3}For $k\ge5$ and $G$ as above, $G^{2^{m}-e\left(m\right)}<\ln G$.
\end{proposition}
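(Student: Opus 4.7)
The plan is to derive a sharp closed-form lower bound on $e(t)$, turn the exponential inequality into a scalar comparison via logarithms, and then verify the resulting scalar inequality down to the stated threshold $k=5$.

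First I would telescope the definition of $e(t)$. Writing $e(t)=\sum_{i=0}^{t}2^{i}(k_{i}-k_{i+1})$ and shifting the index in the second half yields $e(t)=k+\sum_{i=1}^{t}2^{i-1}k_{i}$. Using the elementary floor bound $k_{i}=\lfloor k/2^{i}\rfloor>k/2^{i}-1$, each summand $2^{i-1}k_{i}$ exceeds $k/2-2^{i-1}$, so $e(t)>k+tk/2-(2^{t}-1)$. Because $t=\lfloor\log_{2}k\rfloor-1$ forces $2^{t+1}\le k$, the two leading terms cancel favourably and we obtain the clean bound $e(t)-2^{t}>tk/2+1$.

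Next, taking logarithms and using $q=e^{\gamma}$, the goal $q^{2^{t}-e(t)}<\gamma$ is equivalent to $\gamma(e(t)-2^{t})>\ln(1/\gamma)$, so by the previous step it suffices to show $\gamma(tk/2+1)\ge\ln(1/\gamma)$. Substituting $\gamma=(1+\epsilon)\ln 4/k$, using the crude bound $t\ge x-2$, and exploiting the key identity $\epsilon\ln k=(3/x)\cdot x\ln 2=\ln 8$, all $\ln k$ terms cancel and the requirement collapses to
\[
u-\ln u\le\ln 8+u/k,\qquad u=(1+\epsilon)\ln 4.
\]
At $k=5$ one computes $u\approx 3.18$, $u-\ln u\approx 2.02$, and $\ln 8+u/5\approx 2.71$, verifying the inequality. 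For $k\ge 6$ the parameter $\epsilon=3/\log_{2}k$ shrinks, so $u$ decreases toward $\ln 4\approx 1.39$ and (since $u>1$) $u-\ln u$ decreases toward $\ln 4-\ln\ln 4\approx 1.06$, while the right-hand side stays above $\ln 8\approx 2.08$. Hence the inequality holds throughout the range.

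The main obstacle is the tightness at the low end: if one dropped the $+1$ from the lower bound on $e(t)-2^{t}$, the requirement would become $u-\ln u\le\ln 8$, which already fails at $k=4$ and leaves only about $0.06$ of slack at $k=5$. Carrying the $+1$ through provides just enough room, which is consistent with the proposition being stated exactly from $k=5$ onwards. Beyond that, the argument is uniform and algebraic, with no need to split into cases.
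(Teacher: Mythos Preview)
Your proof is correct and follows essentially the same route as the paper: both use the Claim's lower bound on $e(t)$ (your telescoped form $e(t)>k+tk/2-(2^{t}-1)$ matches the paper's improved $\ell=t$ bound), combine it with $2^{t+1}\le k$ and $t\ge x-2$, and reduce to a one-variable scalar inequality. Your condition $u-\ln u\le\ln 8+u/k$ with $u=(1+\epsilon)\ln 4$ is, up to the harmless $+u/k$ term, the logarithm of the paper's $y\le 8\ln y$ with $y=4^{1+\epsilon}=e^{u}$, so the two arguments are equivalent (and in particular the $+1$ you carry is convenient but not actually needed, as the paper's version already clears $k=5$ with the $0.06$ slack you noted).
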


\begin{remark*}
Theorem~\ref{thm:asymptotic-upper-bound} chooses $G$ suboptimally.
Empirical evidence shows that, for all $k\ge2$, the optimal $G=G^{*}$
for \textsc{Recursive$\left(G,K^{*}\right)$} satisfies~\eqref{eq:cond1}
and one of~\eqref{eq:cond2} and~\eqref{eq:cond3}. In other words,
it is the smallest root of either $1-x^{-1}=x^{2^{m}-e\left(m\right)}$
or $1-x^{-1}=x^{-e\left(\ell\right)}\left(x^{2^{\ell}}-x^{-2^{\ell}}\right)$
for some $\ell=0,\ldots,m-1$
(see also Tables~\ref{tbl:recursive-best} and~\ref{tbl:recursive-worst}).
\end{remark*}
\begin{remark*}
With additional effort the constant $3$ in Theorem~\ref{thm:asymptotic-upper-bound}
can be improved by a factor of almost~6 to $\tau:=-\log_{2}\ln2\approx0.53$.
The major obstacle is that cases $\ell=m-2$ and $\ell=m-1$ of~\eqref{eq:cond2}
need to be done separately since the appropriate $f\left(x,z\right)$
in the proof of Proposition~\ref{prop:gamma-bound-case2} is negative
for $z<\log_{2}\left(\ln4/\left(1-\ln2\right)\right)\approx2.1756$.
No proof is possible for $\tau'<\tau$ since then~\eqref{eq:cond3}
would be violated for large enough $k=2^{m+2}-1$.
\end{remark*}
\begin{remark*}
We verified that the algorithm \textsc{Recursive$\left(G^{*},K^{*}\right)$} is
$\left(1+\tau/\log_{2}k\right)\frac{k+1}{k}\ln4$-efficient for $2\le k\le2^{13}$
as well. 
See Tables~\ref{tbl:recursive-best} and~\ref{tbl:recursive-worst} for $k=2^{m+1}$ (best case) and $k=2^{m+2}-1$ (worst case), respectively.
\end{remark*}

Before proving Propositions~\ref{prop:gamma-bound-case2} and~\ref{prop:gamma-bound-case3}
we would like to simplify $e\left(\ell\right)$ for our $K^{*}$.
\begin{claim*}
$e\left(\ell\right)>\left(\ell+1\right)k/2-2^{\ell}$ for $\ell=0,\ldots,m$;
moreover, $e\left(m\right)>\left(m+2\right)k/2-2^{m}$.
\end{claim*}

\begin{proof}
We have
\begin{align*}
e\left(\ell\right) & =\sum_{j=0}^{\ell}2^{i}\left(k_{j}-k_{j+1}\right)=k_{0}-2^{\ell}k_{\ell+1}+\sum_{j=1}^{\ell}2^{j-1}k_{i}\\
 & =k-2^{\ell}\left\lfloor 2^{-\ell-1}k\right\rfloor +\sum_{j=1}^{\ell}2^{j-1}\left\lfloor 2^{-j}k\right\rfloor \\
 & \ge k-\frac{k}{2}+\frac{1}{2}\sum_{j=1}^{\ell}\left(k-2^{j-1}\right)=\frac{k}{2}\left(\ell+1\right)+1-2^{\ell}.
\end{align*}
The slightly improved bound for $\ell=m$ is obtained by observing
that $2^{m}k_{m+1}=0$ in the above calculation since $k_{m+1}=0$.
\end{proof}
Write $x=\log_{2}k$ and observe that
$G^{k/2}=e^{\left(1+3/x\right)\ln2}=2^{1+3/x}$
and $G^{kx/2}=2^{x+3}=8k$ for our choice of $G$.

\begin{proof}[Proof of Proposition~\ref{prop:gamma-bound-case3}]
By the claim above
\[
e\left(m\right)-2^{m}>\left(m+2\right)k/2-2^{m+1}\ge kx/2-k
\]
so
\[
\ln G=\frac{q}{k+1}=\frac{1}{k}\ln\left(4^{1+3/x}\right)\ge\frac{4^{1+3/x}}{8k}=G^{k-kx/2}>G^{2^{m}-e\left(m\right)},
\]
where the first inequality is true since $\ln y\ge y/8$ for $2\le y\le26$
and indeed for $k\ge5$ we have $0<3/x<1.3$ and $4<4^{1+3/x}<25$.
\end{proof}
 
\begin{proof}[Proof of Proposition~\ref{prop:gamma-bound-case2}]
By the claim above
\begin{align*}
G^{-e\left(\ell\right)}\left(G^{2^{\ell}}-G^{-2^{\ell}}\right) & <G^{2^{\ell}-\left(\ell+1\right)k/2}\left(G^{2^{\ell}}-G^{-2^{\ell}}\right)\\
 & =2^{-\left(1+3/x\right)\left(\ell+1\right)}\left(G^{2^{\ell+1}}-1\right)\\
 & =2^{-\left(\ell+1\right)}\cdot8^{-\left(\ell+1\right)/x}\left(\exp\left(2^{\ell+1}\ln G\right)-1\right)\\
 & <8^{-\left(\ell+1\right)/x}\frac{\ln G}{1-2^{\ell+1}\ln G},
\end{align*}
using $e^{y}-1<y/\left(1-y\right)$ for all $0<y\le1$, and taking
$y=2^{\ell+1}\ln G$. Thus, it remains to show that
\[
8^{-\left(\ell+1\right)/x}\le1-2^{\ell+1}\ln G=1-2^{\ell+1-x}\left(1+3/x\right)\ln4.
\]
Let $z=x-\left(\ell+1\right)$ and note that $1\le z\le x-1$ as $0\le\ell\le m-1$.
Define
\[
f\left(x,z\right):=1-2^{1-z}\left(1+3/x\right)\ln2-8^{z/x-1};
\]
to conclude the proof we show that $f\left(x,z\right)$ is positive
for all $1\le z\le x-1$ and $x\ge13$.

First we compute some partial derivatives of $f$.
\begin{eqnarray*}
\frac{df\left(x,1\right)}{dx} & = & \frac{3\ln2}{x^{2}}\left(1+8^{1/x-1}\right);\\
\frac{df\left(x,x-1\right)}{dx} & = & \frac{3\ln2}{x^{2}}\left(2^{2-x}-8^{-1/x}\right)+2^{2-x}\left(1+\frac{3}{x}\right)\ln^{2}2;\\
\frac{\partial f\left(x,z\right)}{\partial z} & = & \left(1+\frac{3}{x}\right)2^{1-z}\ln^{2}2-\frac{3\ln2}{x}8^{z/x-1};\\
\frac{\partial^{2}f\left(x,z\right)}{\left(\partial z\right)^{2}} & = & -\left(1+\frac{3}{x}\right)2^{1-z}\ln^{3}2-\left(\frac{3\ln2}{x}\right)^{2}8^{z/x-1}.
\end{eqnarray*}
Now $\frac{df\left(x,1\right)}{dx}>0$ everywhere, so $f\left(x,1\right)$
is increasing and $f\left(x,1\right)\ge f\left(13,1\right)>0$ for
all $x\ge13$. Next $\frac{df\left(x,x-1\right)}{dx}<0$ for $x>7$,
so $f\left(x,x-1\right)$ is decreasing and
\[
f\left(x,x-1\right)>\lim_{x\to\infty}f\left(x,x-1\right)=\lim_{x\to\infty}1-8^{-1/x}=0
\]
for all $x>7$. Lastly, $f\left(x,z\right)$ is concave in $z$ as
$\frac{\partial^{2}f}{\left(\partial z\right)^{2}}<0$ everywhere.
Thus
\[
\min_{1\le z\le x-1}f\left(x,z\right)=\min\left\{ f\left(x,1\right),f\left(x,x-1\right)\right\} >0
\]
for all $x\ge13$.
\end{proof}


\section{\label{sec:lower-bound}Asymptotically Optimal Lower Bounds}

In this section we prove lower bounds on $q_{k}$, focusing on asymptotic
lower bounds in which~$k$ grows to infinity.

We start by reproving the simple asymptotic lower bound $q_{k}\ge2-\ln2-o\left(1\right)$ of~\cite[Theorem~6]{BDNS13},
and then improve it to $q_{k}\ge\ln4$, which is asymptotically optimal via the matching upper bound of Section~\ref{sec:recursive-construction}.


\subsection{Stability and bounding expressions}


Obtaining lower bounds requires viewing the problem from a different
perspective. It will sometimes be more convenient to refer to a certain
physical checkpoint, without considering its temporary freshness index $p$ in the checkpoint
sequence at some snapshot $S$ (which is variable and depends on $S$).

Given a $k$-checkpoint algorithm, we define a function $BE\left(s\right)$
and use it to bound its efficiency from below. The parameter $s$
is related to the notion of stability, which we now define.
\begin{defn*}
Fix a $k$-checkpoint algorithm. A checkpoint updated at time $T$ is called
$s$-stable, for some $s=1,\ldots,k-1$, if at least $s$ previous
checkpoints are updated before the next time it is updated.
\end{defn*}

By Property~\ref{property:update-1-infinitely-often} we can assume all
checkpoints get updated eventually; this means that in a snapshot $S=\left(T_{1},\ldots,T_{k}\right)$,
where $T_{k}$ is a time by which all checkpoints have been updated from
the initial snapshot, we have that the checkpoint updated at time $T_{k-s}$
is $s$-stable for $s=1,\ldots,k-1$.

For convenience, the proofs in this section assume the update times
sequence is normalized by a constant. This is captured by the following
definition.
\begin{defn*}
A $k$-checkpoint algorithm is called $s$-normalized if an $s$-stable checkpoint
is updated at time $R_{0}=1$.
\end{defn*}

Given an $s$-normalized $k$-checkpoint algorithm, we define a sequence
of times $1=R_{0}<R_{1}<\cdots<R_{s}$ as follows: $R_{i}$ for $i\ge1$
is the time at which the $i$-th checkpoint is \emph{removed} from $\left(0,1\right]$.
In other words, $R_{1}$ is the time $T>R_{0}$ at which some checkpoint
is updated; $R_{2}$ is the time $T>R_{1}$ at which we update the
next checkpoint that was previously updated in $\left(0,1\right]$ (but
not at $T>1$), and so forth. Note that the checkpoint updated at time
$R_{0}$ is not updated at any time $R_{i}$ for $i=1,\ldots,s$ by
the definition of stability. Now we are ready to define $BE\left(s\right)$.
\begin{defn*}
The $T$-truncated bounding expression of an $s$-normalized $k$-checkpoint algorithm
is $BE_{T}(s)=\sum_{i=1}^{s}U_{i}$, where $U_{i}=\min\left\{T,R_{i}\right\} $.
\end{defn*}

The bounding expression plays a crucial role in proving lower bounds,
based on Proposition~\ref{prop:BE-at-least-1} below. We note that
the truncated bounding expression only depends on the algorithm's behavior
until time $T$, and hence the bounds that can be obtained from it
are not tight for $k>3$. Nevertheless, the lower bound we obtain
using $BE_{2}$ in Corollary~\ref{cor:LB-optimal} is asymptotically
optimal, since the gap between it and the upper bound of Theorem~\ref{thm:asymptotic-upper-bound}
tends to zero as $k$ grows to infinity.
\begin{remark*}
It is possible to analyze $BE_{T}$ beyond $T=2$ and obtain tight
lower bounds for larger values of $k$. However, there is no asymptotic
improvement and the analysis becomes increasingly more technical as
$k$ grows.
\end{remark*}


\subsection{\label{sec:LB-weaker}Asymptotic lower bound of $2-\ln2\approx1.3068$}

To simplify the analysis, we assume $k$ is even. It can be extended
to cover odd values of $k$ as well, but this gives no asymptotic
improvement since $q_{k+1}\le q_{k}\cdot\frac{k+2}{k+1}$ for all $k$,
so we only lose an error term of $O\left(1/k\right)$, which is of
the same order as the error terms in Corollaries~\ref{cor:LB-suboptimal}
and~\ref{cor:LB-optimal}.

To simplify our notation we write $b=q/(k+1)$ throughout this section.
\begin{proposition}
\label{prop:BE-at-least-1}Any $\left(k/2\right)$-normalized $b(k+1)$-efficient
$k$-checkpoint algorithm satisfies $BE_{2}(k/2)\geq1/b$.
\end{proposition}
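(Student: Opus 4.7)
\medskip\noindent
My plan is to establish the uniform lower bound $R_{i}\ge 2/c$ for every $i\in\{1,\ldots,s\}$ with $s=k/2$, and then sum over $i$. All the work happens in a single snapshot $S_{R_{i}}$.

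The key input is the $s$-stability of the base device, which is updated at $R_{0}=1$. I would first observe that none of $R_{1},\ldots,R_{s}$ is a base update: distinct removals consume distinct non-base devices (each device's time only increases, so it can leave $(0,1]$ at most once), so if the base were updated at some $R_{i}$ with $i\le s$ then only $i-1\le s-1$ distinct non-base devices would have been updated in $(R_{0},R_{i})$, contradicting $s$-stability.

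Consequently, in the snapshot $S_{R_{i}}$ (for $i\le s$) the base still sits at time $1$, and $S_{R_{i}}$ contains exactly $k-i$ points in $(0,1]$: the base plus $k-i-1$ non-base devices not yet removed. Together with the left endpoint $0$, these $k-i$ points partition $(0,1]$ into $k-i$ consecutive intervals of total length $1$. Applying $c$-efficiency at $T=R_{i}$ bounds each such interval by $bR_{i}$, so $1\le(k-i)\,b\,R_{i}$, yielding
\[
R_{i}\;\ge\;\frac{1}{(k-i)\,b}\;\ge\;\frac{2}{c},
\]
where the second inequality uses $k-i\ge k/2$ for $i\le s$.

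To conclude, we may assume $c\ge 1$ (since $c_{k}\ge 1$ trivially), so $2/c\le 2$ and hence $U_{i}=\min\{2,R_{i}\}\ge 2/c$ for every $i\le s$. Summing gives $BE_{2}(k/2)=\sum_{i=1}^{s}U_{i}\ge s\cdot(2/c)=k/c=1/b$, as required. I do not expect a real obstacle: the only step that warrants attention is pinning the base at time $1$ throughout the first $s$ removals, which is immediate from $s$-stability once one notes that distinct removals consume distinct non-base devices.
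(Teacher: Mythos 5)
Your initial derivation $1 \le (k-i)\,b\,R_{i}$ is correct, but the second inequality in your chain,
\[
\frac{1}{(k-i)\,b}\;\ge\;\frac{2}{c},
\]
is reversed. Since $c=bk$, this is equivalent to $k-i\le k/2$, i.e., $i\ge k/2$ --- the \emph{opposite} of the range you need. For $i<k/2$ the hypothesis $k-i\ge k/2$ makes the denominator larger, so $\frac{1}{(k-i)b}\le\frac{2}{c}$. Indeed, the claim $R_{i}\ge 2/c$ is simply false: taking $i=1$, your own bound only gives $R_{1}\ge\frac{1}{(k-1)b}=\frac{k}{(k-1)c}$, which is just barely above $1$ for large $k$, whereas $2/c>1$ whenever $c<2$; in a near-geometric scheme $R_{1}$ is well below $2/c$. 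So the plan of establishing a uniform bound $R_{i}\ge 2/c$ and summing cannot work. Even summing your per-index bound does not save it: $\sum_{i=1}^{k/2}\frac{1}{(k-i)b}=\frac{1}{b}\sum_{j=k/2}^{k-1}\frac{1}{j}\approx\frac{\ln 2}{b}$, which falls short of $1/b$ by a constant factor.

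The source of the slack is that you bound \emph{every} interval in $(0,1]$ at time $R_{i}$ by $b R_{i}$, but most of those intervals are untouched since $T=1$ and therefore still satisfy the much tighter bound $\le b$. The paper's proof exploits exactly this: after $j$ removals, $(0,1]$ contains $j$ ``merged'' intervals (the one created at $R_{\ell}$ has length $\le bR_{\ell}$) and $k-2j$ ``original'' intervals of length $\le b$, yielding the sharper inequality $1/b\le k-2j+\sum_{\ell=1}^{j}R_{\ell}$ (one also needs $R_{\ell}\ge 1$ to absorb the case where a merge consumes a previously merged interval). Applying this at the largest $j\le k/2$ with $R_{j}\le 2$ gives exactly $BE_{2}(k/2)$ on the right-hand side, because $k-2j$ is precisely $\sum_{i=j+1}^{k/2}U_{i}$. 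Your setup --- the stability argument pinning the base at $1$, the count of $k-i$ points in $(0,1]$ --- is all correct; what is missing is this finer bookkeeping of which intervals have been merged.
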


\begin{proof}
At time $R_{0}=1$, the time interval $(0,1]$ contains $k$ subintervals
of length $\le b$, giving rise to the inequality $b\cdot k\geq1$.
At time $R_{1}$, a checkpoint is removed from $(0,1]$ and it now contains
one subinterval of length $\le b\cdot R_{1}$ (two previous subintervals,
each of length $\le b$, were merged), and $k-2$ subintervals of length
$\le b$, giving $b\cdot(k-2+R_{1})\geq1$.

At time $R_{2}$, an additional checkpoint is removed from the time interval
$(0,1]$, hence it must contain an subinterval of length $\le b\cdot R_{2}$
formed by merging two previous subintervals. We obtain $b\cdot(k-4+R_{1}+R_{2})\geq1$,
since the remaining $k-3$ subintervals must include $k-4$ subintervals
of length $\le b$ and one (additional) subinterval of length at most
$\le b\cdot R_{1}$. Note that this claim holds regardless of which
checkpoint is updated at $R_{2}$, and it holds in particular in case
one of the subintervals merged at time $R_{2}$ contains the subintervals
merged at $R_{1}$ (in fact, this case gives the stronger inequality
$b\cdot(k-3+R_{2})\geq1$).

In general, for $j=1,2,\ldots,k/2$, at time $R_{j}$ the time interval
$(0,1]$ must contain $j$ distinct subintervals of lengths $\le b\cdot R_{i}$
for $i=1,2,\ldots,j$, and $k-2j$ subintervals of length $\le b$. This
gives the inequality
$k-2j+\sum_{i=1}^{j}R_{i}\geq1/b.$

Let $j\le k/2$ be the largest index such that $R_{j}\leq2$, so $U_{i}=R_{i}$
for all $1\le i\le j$ and $U_{i}=2$ for all $j<i\le k/2$. Now at
time $R_{j}$ we have
\[
1/b\leq k-2j+\sum_{i=1}^{j}R_{i}=\left(k/2-j\right)\cdot2+\sum_{i=1}^{j}U_{i}=\sum_{i=j+1}^{k/2}U_{i}+\sum_{i=1}^{j}U_{i}=BE_{2}(k/2).
\]
\end{proof}

Now we need an upper bound on the bounding expression. For the simpler
lower bound of $2-\ln2$ we use the following proposition.
\begin{proposition}
\label{prop:Ri-bounded}Any $\left(k/2\right)$-normalized $b(k+1)$-efficient
$k$-checkpoint algorithm satisfies $R_{i}\leq1/\left(1-bi\right)$ for $i=1,\ldots,k/2$.
\end{proposition}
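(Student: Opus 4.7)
The plan is to analyze the snapshot immediately after the update at time $R_i$ and extract the bound from $c$-compliance of the $i$ newest intervals. Two steps suffice: first nail down the shape of that snapshot, then average.

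\textbf{Step 1 (snapshot shape).} I would first argue that the post-$R_i$ snapshot has the form
\[
T_1 < T_2 < \cdots < T_{k-i} = 1 < T_{k-i+1} < \cdots < T_k = R_i.
\]
The value $T_{k-i} = 1$ is pinned by $(k/2)$-normalization: the device updated at $R_0 = 1$ is $(k/2)$-stable, so since $i \leq k/2$ it is not re-updated at any of $R_1, \ldots, R_i$ nor at any intermediate update in between, and its update time remains exactly $1$. Its rank in the current snapshot is $k-i$ because the number of device times that have escaped $(0,1]$ is exactly $i$: each of the removals at $R_1,\ldots,R_i$ increases this count by one, while any intermediate update (re-updating a device whose previous time was already $> 1$) leaves the count unchanged. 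The newest of these $i$ times is $R_i$ itself.

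\textbf{Step 2 (averaging).} I would then apply $c$-compliance at time $R_i$ (equivalently $bk$-efficiency with $b = c/k$) to the $i$ consecutive intervals $T_{k-i+j} - T_{k-i+j-1}$ for $j=1,\ldots,i$. Each has length at most $b R_i$, and together they tile $(1, R_i]$, so
\[
R_i - 1 \;=\; \sum_{j=1}^{i}\bigl(T_{k-i+j} - T_{k-i+j-1}\bigr) \;\leq\; i\,b\,R_i.
\]
Since $i \leq k/2$ and the trivial bound $c_k \leq 2$ from the introduction gives $b \leq 2/k$, we have $ib \leq 1$, and $ib < 1$ whenever the scheme is strictly better than $2$-efficient. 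Rearranging yields $R_i(1-ib) \leq 1$, hence $R_i \leq 1/(1-ib)$ as required.

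The only real content is Step 1 — once the snapshot is correctly identified, Step 2 is a one-line averaging argument. I expect that identifying the snapshot shape is the only potentially subtle point: specifically, verifying that intermediate updates (between successive $R_j$'s) do not affect the anchor $T_{k-i} = 1$ nor the count of devices with update times above $1$. Both of these are immediate consequences of $(k/2)$-stability and of the definition of $R_j$ as the $j$-th removal from $(0,1]$.
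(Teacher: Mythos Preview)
Your proof is correct and uses essentially the same idea as the paper's: both exploit that the region $(1,\cdot]$ is tiled by intervals of length at most $b$ times the current time---you by summing the $i$ such intervals at the snapshot time $R_i$ to obtain $R_i-1\le ibR_i$, the paper by fixing $T=1/(1-bi)$ and pigeonholing to force at least $i+1$ intervals in $(R_0,T+\epsilon]$, hence $R_i\le T$. One minor slip: invoking $c_k\le 2$ to justify $ib<1$ is misplaced, since the proposition concerns an arbitrary $bk$-efficient scheme rather than one with $b=c_k/k$; but both arguments tacitly require $ib<1$ for the bound $1/(1-bi)$ to be meaningful, so this does not affect correctness.
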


\begin{proof}
At time $T=1/\left(1-bi\right)$, all subintervals are of length at most
$bT=b/\left(1-bi\right)$. Since $T-R_{0}=1/\left(1-bi\right)-1=bi/\left(1-bi\right)$,
for any $\epsilon>0$ the time interval $(R_{0},T+\epsilon]$ must consist
of at least $i+1$ subintervals, implying that the $i$-th checkpoint was
removed from the time interval $(0,1]$ by time $T$.
\end{proof}

\begin{proposition}
\label{prop:BE-weak-upper-bound}Let $b<\frac{1}{2}$. Any $\left(k/2\right)$-normalized
$b(k+1)$-efficient $k$-checkpoint algorithm satisfies
$b\cdot BE_{2}(k/2)\le\ln2+b/(1-2b)+bk-1.$
\end{proposition}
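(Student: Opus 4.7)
The plan is to bound $U_i \le \min\{2, 1/(1-bi)\}$ via Proposition~\ref{prop:Ri-bounded}, compare the resulting sum to an integral, and check that the discretization error is swallowed by the gap between $b$ and $\frac{b}{1-2b}$. Concretely, I would work with the continuous function $f(x) = \min\{2, 1/(1-bx)\}$ on $[0, k/2]$: it equals $1/(1-bx)$ on $[0, 1/(2b)]$, equals $2$ on $[1/(2b), k/2]$, and is nondecreasing. The split point $1/(2b)$ lies in $[0, k/2]$ because any $c$-efficient scheme has $c \ge 1$, whence $bk \ge 1$, so $f(0) = 1$ and $f(k/2) = 2$.

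For the sum-to-integral step I would use the standard estimate that for a nondecreasing $f$,
$$\sum_{i=1}^{k/2} f(i) \le \int_0^{k/2} f(x) \, dx + f(k/2) - f(0),$$
proved by writing the left side minus the integral as $\sum_{i=1}^{k/2}\left(f(i) - \int_{i-1}^i f(x)\, dx\right)$, bounding each summand by $f(i) - f(i-1)$, and telescoping. Here the telescoped correction equals $2 - 1 = 1$. The integral itself splits at $1/(2b)$: the first piece evaluates to $[-\ln(1-bx)/b]_0^{1/(2b)} = (\ln 2)/b$, and the second piece contributes $2(k/2 - 1/(2b)) = k - 1/b$.

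Combining these, $BE_2(k/2) \le (\ln 2)/b + k - 1/b + 1$. Multiplying through by $b$ yields $b \cdot BE_2(k/2) \le \ln 2 + bk - 1 + b$, and since $b \le \frac{b}{1-2b}$ for $0 < b < 1/2$, this is dominated by the claimed $\ln 2 + \frac{b}{1-2b} + bk - 1$. No step is genuinely difficult; the only things to verify carefully are the piecewise structure of $f$ at the junction $x = 1/(2b)$ and the assumption $bk \ge 1$ that places this junction inside $[0, k/2]$, so that the integral decomposition is valid.
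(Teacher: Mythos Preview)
Your proof is correct and follows essentially the same approach as the paper: bound $U_i$ by $\min\{2,1/(1-bi)\}$ via Proposition~\ref{prop:Ri-bounded}, then compare the resulting sum to an integral split at $x=1/(2b)$. Your treatment is in fact slightly cleaner---by applying the monotone sum--integral correction $f(k/2)-f(0)=1$ uniformly you obtain the sharper intermediate bound $b\cdot BE_2(k/2)\le\ln 2+bk-1+b$ before relaxing $b\le b/(1-2b)$, whereas the paper splits the sum explicitly at $\lfloor 1/(2b)\rfloor$ and uses a shifted integrand $1/(1/b-x-1)$ to absorb the discretization, landing directly on the $b/(1-2b)$ term.
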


\begin{corollary}
\label{cor:LB-suboptimal}For all even $k\ge4$ we have $q_{k}\ge2-\ln2-o\left(1\right)$.
\end{corollary}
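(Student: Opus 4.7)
The plan is to combine the sandwich provided by Propositions~\ref{prop:BE-at-least-1} and~\ref{prop:BE-weak-upper-bound} to obtain a direct inequality on $b = c/k$, and then observe that the slack term in the upper bound vanishes as $k \to \infty$, since we already know $c_k \le 2$ from the introduction so $b$ is bounded and in fact tends to zero along the extremal schemes we would have to consider.

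More concretely, I would start with an arbitrary $bk$-efficient $k$-backup scheme with $k$ even and reduce to the setting of Section~\ref{sec:LB-weaker}. By Property~\ref{obs:update-1-infinitely-often} we may assume every device is updated infinitely often, so we can pick a time $T_k$ by which all devices have been updated at least once; then the device that was updated at position $T_{k/2}$ in the corresponding snapshot is $(k/2)$-stable by the discussion preceding the definition of $s$-stability. Rescaling the time axis by a positive constant (which is allowed since compliance and efficiency are homogeneous) so that this $(k/2)$-stable device was updated at time $1$, we obtain a $(k/2)$-normalized $bk$-efficient scheme to which both propositions apply.

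Next I would apply Proposition~\ref{prop:BE-at-least-1} to get $BE_2(k/2) \ge 1/b$, and Proposition~\ref{prop:BE-weak-upper-bound} to get
\[
1 \;\le\; b \cdot BE_2(k/2) \;\le\; \ln 2 + \frac{b}{1-2b} + bk - 1,
\]
whence
\[
c = bk \;\ge\; 2 - \ln 2 - \frac{b}{1-2b}.
\]
Since $c_k \le 2$ by the trivial doubling construction from the introduction, we may restrict attention to schemes with $b = c/k \le 2/k$, so the error term satisfies $\frac{b}{1-2b} = O(1/k) = o(1)$ as $k \to \infty$. This yields $c_k \ge 2 - \ln 2 - o(1)$, as required.

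The only subtle point — and the step worth double-checking — is the reduction to a $(k/2)$-normalized scheme, since the normalization time must be the update time of a device that really is $(k/2)$-stable in the sense of the definition; this is why the preliminary use of Property~\ref{obs:update-1-infinitely-often} and the choice of $T_k$ after all initial devices have been refreshed is essential. Everything else is a routine chaining of the two already-proved propositions with a trivial bound on $b$ to control the error term.
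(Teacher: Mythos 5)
Your proof is correct and takes essentially the same route as the paper: fix a $(k/2)$-normalized $c_k$-efficient scheme, chain Propositions~\ref{prop:BE-at-least-1} and~\ref{prop:BE-weak-upper-bound} to get $c_k \ge 2 - \ln 2 - \frac{b}{1-2b}$, and use $c_k \le 2$ (hence $b \le 2/k$) to see the error term is $O(1/k)$. You are somewhat more explicit about the normalization step, and your bound $\frac{b}{1-2b}=O(1/k)$ via $b\le 2/k$ is in fact cleaner than the paper's stated intermediate inequality $1/(k/c_k-2)\le 1/(k-2)$, which as written would require $c_k\le 1$; both versions yield the same asymptotic conclusion.
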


\begin{proof}
Fix a $\left(k/2\right)$-normalized $q_{k}$-efficient $k$-checkpoint algorithm, and let $b=q_{k}/(k+1)<\frac{1}{2}$. By Propositions~\ref{prop:BE-at-least-1}
and~\ref{prop:BE-weak-upper-bound} we have
 \[
 q_{k}=bk+b\ge2-\ln2-\frac{b}{1-2b}+b=2-\ln2-\frac{2b^2}{1-2b}\ge2-\ln2-\frac8{(k+1)(k-3)}.
 \]
\end{proof}

\begin{proof}[Proof of Proposition~\ref{prop:BE-weak-upper-bound}]
For $i\leq\lfloor1/2b\rfloor$, we have $R_{i}\leq1/\left(1-b\lfloor1/2b\rfloor\right)\leq2$,
but we cannot assure that $R_{i}\leq2$ for $i>\lfloor1/2b\rfloor$.
By Proposition~\ref{prop:Ri-bounded},
\[
b\cdot BE_{2}\left(k/2\right)\leq b\cdot\left(\sum_{i=1}^{\lfloor1/2b\rfloor}R_{i}+\sum_{i=\lfloor1/2b\rfloor+1}^{k/2}2\right)\leq\sum_{i=1}^{\lfloor1/2b\rfloor}\frac{b}{1-bi}+bk-1.
\]
Now for $b<\frac12$ the sum on the right-hand side can be bounded
by
\begin{align*}
\sum_{i=1}^{\lfloor1/2b\rfloor}\frac{1}{1/b-i} & \le\intop_{0}^{1/2b}\frac{dx}{1/b-x-1}=\ln\left(\frac{1}{b}-1\right)-\ln\left(\frac{1}{2b}-1\right)\\
 & =\ln2+\ln\left(1+\frac{b}{1-2b}\right)\le\ln2+\frac{b}{1-2b},
\end{align*}
establishing the proposition.
\end{proof}


\subsection{Improved asymptotic lower bound of $\ln4\approx1.3863$}

We now improve the asymptotic lower bound to $\ln4$. This result
is a simple corollary of the following lemma, which gives a tighter
upper bound on the bounding expression.
Recall that $q=b(k+1)$ and thus $G=(k+1)/(k+1-q)=1/(1-b)$.
\begin{lemma}
\label{lem:BE-upper-bound} For any $s$-normalized $b(k+1)$-efficient
$k$-checkpoint algorithm such that $1\le s\le k/2$ and $G^{k/2}\le2$
we have $b\cdot BE_{2}(s)\le G^s-1$.%
\end{lemma}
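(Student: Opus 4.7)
The plan is to prove the equivalent inequality $\sum_{i=1}^s U_i \le \sum_{i=1}^s (1-b)^{-i}$, using the telescoping identity $b \sum_{i=1}^s (1-b)^{-i} = (1-b)^{-s}-1$. The hypothesis $(1-b)^{-k/2} \le 2$ guarantees that $(1-b)^{-i} \le 2$ for every $i \le s \le k/2$, so the cap in $U_i = \min\{2,R_i\}$ is consistent with aiming to bound $U_i$ against $(1-b)^{-i}$. I would induct on $s$. The base case $s=1$ is direct: the rightmost snapshot interval $(1,R_1]$ at time $R_1$ satisfies $R_1 - 1 \le bR_1$, so $R_1 \le (1-b)^{-1}$ and $U_1 \le (1-b)^{-1}$.

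For the inductive step, the main tool is the ``current interval'' chain. At each update time $t$, efficiency of the rightmost interval gives $t \le t_{\mathrm{prev}}/(1-b)$, which iterates to $u_n \le (1-b)^{-n}$ for the $n$-th update. In the recycling-free case, $R_i$ is the $i$-th update, so $R_i \le (1-b)^{-i}$ holds termwise and the sum bound follows. The technical heart of the proof lies in the recycling case: an intervening recycling between $R_{i-1}$ and $R_i$ can push the chronological index $r_i$ above $i$, making the chain bound $R_i \le (1-b)^{-r_i}$ weaker than $(1-b)^{-i}$ and allowing $R_i$ itself to exceed $(1-b)^{-i}$.

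To cover this case, I would analyze the snapshot efficiency at each recycling time $t^*$: after the recycling the reconfigured snapshot must have all intervals of length $\le bt^*$, which constrains the position of the leftmost backup in $(1,t^*]$ and hence back-propagates constraints to the earlier removal positions. Combining this with the snapshot at $R_s$ itself (whose $s$ backups in $(1,R_s]$ satisfy $y_j - y_{j-1} \le bR_s$ with each $y_j$ being an update time bounded by $(1-b)^{-n_j}$), I would introduce the potential $\Phi_i = \sum_{j=1}^{i}\bigl((1-b)^{-j} - R_j\bigr)$ and show $\Phi_i \ge 0$ by an augmented induction that tracks both the cumulative deficit and the shape of the current snapshot, so that any excess of an $R_i$ over $(1-b)^{-i}$ is paid for by a matching surplus accumulated in prior $R_j$'s.

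The main obstacle is to make this compensation rigorous across arbitrary recycling patterns: the set of configurations is combinatorially rich and the position of the recycled backup (leftmost, rightmost, or internal) matters. I expect supergeometricity (Property~\ref{obs:supergeometric}), which limits the density of updates in a factor-$(1-b)^{-1}$ window to at most two, to be crucial in bounding how much recycling can happen between consecutive removals, and hence in ensuring that each ``gain'' in a later $R_i$ is fully offset by earlier constraints.
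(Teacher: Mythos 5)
Your proposal gets the skeleton right — induction on $s$, an easy base case, and the correct identification of recycling (a device outside $(0,1]$ being updated again between two removals) as the obstruction to a simple termwise bound $R_i\le(1-b)^{-i}$. You also correctly sense that supergeometricity must play a role. But the heart of your plan — introducing $\Phi_i=\sum_{j\le i}\bigl((1-b)^{-j}-R_j\bigr)$ and establishing $\Phi_i\ge0$ by an ``augmented induction that tracks both the cumulative deficit and the shape of the current snapshot'' — is not an argument; you explicitly say the main obstacle remains ``to make this compensation rigorous across arbitrary recycling patterns.'' That is precisely the content of the lemma, so the proposal as written has a genuine gap. Moreover, it is not at all clear that the partial-sum inequality $\Phi_i\ge0$ is even the right intermediate target; nothing in your sketch indicates how the snapshot-shape information would close the induction, and the role of the hypothesis $(1-b)^{-k/2}\le2$ — which you interpret only as making the truncation at $2$ ``consistent'' — is never actually used in your outline.

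The paper's proof takes a different and much tighter route that you should compare against. Rather than tracking per-index deficits, it restarts the bound: let $R'$ be the oldest device in $(1,2]$ present at the snapshot at $T=2$. Because $BE_2$ is truncated at $T=2$, only devices present at $T=2$ matter, and a small modification of the scheme makes the device at $R'$ $(s-1)$-stable, so the induction hypothesis can be applied to the scheme re-normalized at time $R'$. If no extra update occurs in $(1,R')$, then $R'=R_1\le(1-b)^{-1}$ and the recursion closes immediately. If there is an extra update, the key point — which your sketch never reaches — is that Property~\ref{obs:supergeometric2} limits the number of updates in $(1,R')$ to exactly one, so only a single recycled device $\ell$ (updated at $R_1$, removed from $(1,R')$ at some later time $x$) needs to be accounted for. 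A short calculation then reduces the claim to the sign of $(1-b)^{1-s}-2+b$, which is negative precisely because $s\le k/2$ and $(1-b)^{-k/2}\le2$; this is where the hypothesis actually earns its keep, not in the truncation cap. In short: the paper sidesteps ``arbitrary recycling patterns'' entirely by showing there is at most one recycled update to worry about per inductive step, whereas your potential-function plan would have to control them all at once and is left unproved.
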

\begin{proof}
The proof is by induction on $s$. 
For the base case $s=1$, one checkpoint is updated at time at
most $G$, giving $b\cdot BE_{2}(1)\leq bG=G-1$.

Assume the hypothesis holds for all $i\leq s-1$ and our goal is to
prove it for $i=s>1$. Without loss of generality we assume the algorithm
satisfies all properties of Section~\ref{sec:basic}. Consider a
snapshot at time $T=2$, and denote by $R'$ the update time of the
first checkpoint in the time interval $(R_{0},T]=(1,2]$ at the snapshot
time $T=2$. We would like to apply the induction hypothesis from
time $R'$, but this cannot be done directly since it is not guaranteed
that the checkpoint last updated at $R'$ in the snapshot at time $T=2$
is $\left(s-1\right)$-stable (potentially, less than $s$ checkpoints
are removed from $(0,1]$ at $T=2$). To overcome this problem, recall
that the truncated bounding expression $BE_{2}$ only considers the
algorithm up to time $T=2$ by setting $U_{i}=\min\left\{ 2,R_{i}\right\}$.
Consequently, we can analyze a slightly different algorithm with the
same bounding expression $BE_{2}(s)$ in which the checkpoint at $R'$
in time $T=2$ is $\left(s-1\right)$-stable.\footnote{There are other ways to solve the problem and apply the induction
hypothesis, e.g., by extending the definition of a stable checkpoint.
However, this seems to require slightly more complex definitions and
induction hypothesis.} The modification is simple: if the original algorithm removes $s'\geq s$
checkpoint from $(0,1]$ in $(1,2]$, no change is required; otherwise,
$s'<s$ and the modified algorithm would simply remove $s-s'$ additional
arbitrary checkpoints from $(0,1)$ at time $T=2$. This transformation
leaves $BE_{2}(s)$ unchanged, and we can analyze it instead. Note
that the modified algorithm maintains all properties of Section~\ref{sec:basic}
at times $T<2$.

We first consider the case in which there is no checkpoint update in the
time interval $(R_{0}=1,R')$, implying that $R'=R_{1}\le G$.
We can now apply the induction hypothesis from $T=R_{1}$ with $i=s-1$
since at least $s-1$ checkpoint are removed from $(0,R_{1})$ before
the checkpoint at $R_{1}$ is updated again, namely, the checkpoint at $R_{1}$
is $\left(s-1\right)$-stable (we have an $\left(s-1\right)$-normalized
checkpoint algorithm). Therefore
\begin{align*}
b\cdot BE_{2}(s) & \leq b\cdot BE_{2}(s-1)\cdot R_{1}+b\cdot R_{1}=\left(b\cdot BE_{2}(s-1)+b\right)\cdot R_{1}\\
 & \leq\left((G^{s-1}-1)+b\right)\cdot G=G^s-1.
\end{align*}
Note that the multiplication of $BE_{2}(s-1)$ with $R_{1}$ undoes
the normalization of the bounding expression at time $T=R_{1}$, and
the addition with $b\cdot R_{1}$ is because $BE_{2}(s)$ should account
for $R_{1}$, but $BE_{2}(s-1)$ should not.

We also note that this actually proves a (slightly) stronger result,
since when calculating $BE_{2}(s)$ from $T=1$, we do not add terms
larger than $T=2$, but when calculating $BE_{2}(s-1)$ from $T=R_{1}$,
the restriction is looser, i.e., not adding terms larger than $T=2\cdot R_{1}>2$.
Therefore, if $BE_{2}(s-1)$ actually contains terms in the time interval
$(2,2\cdot R_{1}]$, then $BE_{2}(s)$ is strictly smaller than $G^s-1$.

We are left to prove the hypothesis for $i=s$ given that there is
at least one checkpoint update in the time interval $(R_{0}=1,R')$. Since
there in no checkpoint in the time interval $(R_{0}=1,R')$ in the snapshot
at $T=2$, then $R'\leq R_{0}+2b=1+2b$. Therefore, $R'/R_{0}\leq1+2b<1/\left(1-b\right)^{2}=G^2$
and by Property~\ref{property:supergeometric2} there is exactly one update
in $(R_{0}=1,R')$. Therefore, the update in $(R_{0}=1,R')$ occurred
at time $R_{1}$, and we denote by $\ell$ the label of the actual
checkpoint involved. Furthermore, we have $R'=R_{2}$.\footnote{The only use of truncating the bounding expression at $T=2$ in the
proof is to limit the number of updates in $(R_{0}=1,R')$ to one.}

Denote by $x$ the time (after $R_{1}$) of the next update of $\ell$
($R_{2}<x\leq2$). After time~$x$, all checkpoints were removed from
$(R_{0}=1,R_{2})$, hence $R_{2}\leq1+b\cdot x$. As in the previous
case, we apply the induction hypothesis from $T=R'=R_{2}$ with $i=s-1$
since we are assured that at least $s-1$ checkpoints are removed from
$(0,R_{2})$ before $R_{2}$ is updated (including $\ell$), hence
the checkpoint updated at $R_{2}$ is $\left(s-1\right)$-stable. We get
$b\cdot BE_{2}(s)\leq b\cdot BE_{2}(s-1)\cdot R_{2}+b\left(R_{2}+G-x\right)$.
Note that we add $b\left(G-x\right)$ to the right hand
side (to bound $b\cdot BE_{2}(s)$) since $\ell$ is first updated
at $R_{1}\le G$ after $T=1$, and not at $x$, which
is the time it is first updated after $R_{2}$ (as considered in $BE_{2}(s-1)$).
Once again, we prove a slightly stronger result than required,
as $BE_{2}(s-1)$ calculated from $T=R_{2}$ may contain terms which
are larger than 2.

Recalling that $R_{2}\leq1+bx=G(1-b)+bx=G+b(x-G)$, we obtain
\begin{align*}
b\cdot BE_{2}(s) & \leq b\cdot BE_{2}(s-1)\cdot R_{2}+b\left(R_{2}+G-x\right)\\
 & =\left(b\cdot BE_{2}(s-1)+b\right)R_{2}-b\left(x-G\right)\\
 & \le\left(G^{s-1}-1+b\right)\left(G+b(x-G)\right)-b\left(x-G\right)\\
 & =(G^{s}-1)(1+b(x/G-1)) -bG(x/G-1)\\
 & =G^{s}-1+b(x/G-1)(G^s-1-G),
\end{align*}
so to show that $b\cdot BE_{2}(s)\le G^s-1$,
it is sufficient to show that $b(x/G-1)(G^s-1-G)\le0$.

Obviously $b>0$; there are 3 checkpoint updates in the time interval $[1,R_{2}]$,
so by Property~\ref{property:supergeometric}, $x>R_{2}>G$, and thus $x/G-1>0$;
lastly, $s\le k/2$ so $G^s\le G^{k/2}\le 2$ by the lemma's assumption, which gives $G^s-1-G\le1-G<0$, as $G>1$.
This completes the induction and the proof of the lemma.
\end{proof}

\begin{corollary}
\label{cor:LB-optimal}For all even $k\ge2$ we have $(1-q_{k}/(k+1))^{-k/2}\ge2$.
In particular, $q_{k}>\ln4$.
\end{corollary}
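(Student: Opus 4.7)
The plan is to pinch the quantity $b\cdot BE_2(k/2)$ from both sides using the two preceding results with $s=k/2$ and $b=c_k/k$. Proposition~\ref{prop:BE-at-least-1} gives $b\cdot BE_2(k/2)\ge 1$, while Lemma~\ref{lem:BE-upper-bound} gives $b\cdot BE_2(k/2)\le(1-b)^{-k/2}-1$ whenever its hypothesis $(1-b)^{-k/2}\le 2$ holds. These two inequalities are compatible only if $(1-b)^{-k/2}\ge 2$, which is exactly the desired bound.

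More precisely, I would argue by contradiction. Suppose $(1-c_k/k)^{-k/2}<2$. By continuity of $x\mapsto(1-x/k)^{-k/2}$ and the definition of $c_k$ as an infimum, there exist some $\epsilon>0$ and a $(c_k+\epsilon)$-efficient $k$-backup scheme whose $b=(c_k+\epsilon)/k$ still satisfies $(1-b)^{-k/2}<2$. Normalize this scheme so that a $(k/2)$-stable device is updated at time $R_0=1$; such a device exists once every backup present in the initial snapshot has been replaced, which by Property~\ref{obs:update-1-infinitely-often} (cf.\ Remark~\ref{rem:rebase}) eventually happens, after which the device updated at $T_{k/2}$ in any standard snapshot is $(k/2)$-stable. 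Applying Proposition~\ref{prop:BE-at-least-1} and Lemma~\ref{lem:BE-upper-bound} with $s=k/2$ then yields $1\le b\cdot BE_2(k/2)\le(1-b)^{-k/2}-1<1$, a contradiction.

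For the asymptotic statement, take logarithms in $(1-c_k/k)^{-k/2}\ge 2$ to obtain $-\ln(1-c_k/k)\ge(\ln 4)/k$. Since $1-e^{-x}\ge x-x^2/2$ for all $x\ge 0$, this gives $c_k/k\ge 1-e^{-(\ln 4)/k}\ge(\ln 4)/k-(\ln 4)^2/(2k^2)$, so $c_k\ge\ln 4-O(1/k)=(1-o(1))\ln 4$. The entire combinatorial content is already carried by Lemma~\ref{lem:BE-upper-bound}; the only subtleties left for this corollary are the brief continuity argument around the infimum and the routine justification of the $(k/2)$-normalization.
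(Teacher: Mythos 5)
Your proof is correct and follows essentially the same route as the paper: pinch $b\cdot BE_2(k/2)$ between $1$ (Proposition~\ref{prop:BE-at-least-1}) and $(1-b)^{-k/2}-1$ (Lemma~\ref{lem:BE-upper-bound}) and conclude by contradiction, then take logarithms for the asymptotics. Your extra care about $c_k$ being an infimum (picking a $(c_k+\epsilon)$-efficient scheme with $\epsilon$ small enough to keep $(1-b)^{-k/2}<2$) is a sound small refinement that the paper elides by speaking directly of a $c_k$-efficient scheme.
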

\begin{proof}
Write $b=q_{k}/(k+1)$ and assume
for the sake of contradiction that $G^{k/2}<2$. By Lemma~\ref{lem:BE-upper-bound}
and Proposition~\ref{prop:BE-at-least-1} we have $1\le b\cdot BE_{2}(k/2)\le G^{k/2}-1$
for a $k/2$-normalized $q_{k}$-efficient $k$-checkpoint algorithm, so $G^{k/2}\ge2$,
contradicting our assumption. Now
\[
\frac{q_{k}}{k+1} = b\ge 1-2^{-2/k}=1-e^{-(\ln4)/k}\ge\frac{\ln4}{k}-\frac{1}{2}\left(\frac{\ln4}{k}\right)^{2} = \left(1-\frac{\ln2}{k}\right)\frac{\ln4}k,
\]
hence $q_{k}\ge\left(1+(1-\ln2)/k-(\ln2)/k^2\right)\ln4 > \ln4$. The last inequality is true when $k>\ln2/(1-\ln2)\approx 2.26$, but we already know that $q_2=1.5>\ln4$.
\end{proof}


\section{\label{sec:applications}Additional Applications of Checkpointing Algorithms}
Most of the applications of online checkpointing algorithms described so far in the literature are related to fault tolerance:
If we discover an error in a lengthy computation, we may want to correct it without restarting the computation from the beginning. In this section
we briefly describe two novel applications for checkpointing algorithms which are motivated by problems in cryptography and cyber security.

Let $f(x)$ be a cryptographic hash function which maps $n$-bit inputs to random-looking $n$-bit outputs. A classical cryptanalytic problem is
to find a collision in a given hash function, i.e., two different inputs $x \neq y$ which are mapped by $f$ to the same output $f(x)=f(y)$.
By the birthday paradox, we expect to find such a collision if we evaluate and compare the values of $f$ for $O(2^{n/2})$ random inputs $x$.
The naive method is to store all these values in an appropriate data structure, but since memory is much more expensive than time, we would like to
find such a repetition using only a small number of $k$ memory cells. For $k=2$ the best known solution is to use Floyd's two finger
algorithm~\cite{Floyd} which iterates the application of $f$ starting from some random initial point $x$. Since the space of values is finite,
the evolving chain must eventually repeat itself, and since $f$ is deterministic the chain will fold into a cycle, and repeat itself forever.
Floyd's algorithm maintains two pointers along the generated chain of values by moving the endpoint pointer at speed~$2$ and
the midpoint pointer at speed~$1$. It stops when the two pointed values are the same. However, this algorithm
is non-optimal for two reasons: It finds a collision only after wasting on average $25 \%$ additional evaluation steps
without noticing that its endpoint is already repeating itself, and it performs on average~$1.5$ evaluations of~$f$ to extend
the evolving chain by one step. To reduce the number of wasted steps, we can maintain a larger number~$k$ of pointers along the evolving
chain, and thus catch the repetition of values at an earlier stage. To make each step more efficient, we can evaluate only the pointer at the
end of the chain, and use our optimal pebbling strategies to leapfrog the memorized pointers to new locations along the chain at zero evaluation cost.
We experimentally tested this strategy with $k=8$ pointers, and observed about $20 \%$ reduction in the worst case waste of our collision finding
algorithm compared to the standard interval doubling algorithm described by Brent~\cite{Brent80}.

A second application is related to backup strategies against sophisticated cyber attacks.
Such attacks try to enhance their destructiveness by stealthily corrupting all the available
backups before launching the actual attack. To model such attacks, we assume that the defender's backup strategy
consists of deciding when to refresh the data in each one of his~$k$ backup devices. When a
backup device is connected to the main computer, one of two things can happen: if the computer is still clean,
the device will instantaneously update all the files with their current contents; if the computer
is already infected, all the data on the device will be lost. The problem is that the defender does not know
whether his computer had already been compromised, and has to prepare for the worst possible choice of infection time.
Note that the standard update strategy of keeping
$k$ external disks stored unpowered in a safe and connecting one of them at the end of each day in a round
robin way will lead to the loss of all the backups $k$ days after the initial infection. It can be shown
that the best backup strategy in this model is to keep all the backups as evenly spread out as possible along the
timeline, so that some of the backup disks will not be connected for a long time, while others will have relatively
fresh versions of the file system. By using our proposed pebbling strategies, the defender can make his worst case loss as small as possible.


\section{\label{sec:conclusion}Concluding Remarks and Open Problems}

In this paper we solved the main open problem in online checkpointing algorithms, 
which is to find tight asymptotic upper and lower bounds on their achievable efficiency.
In addition, we developed efficient techniques for determining tight upper and lower bounds on 
$q_k$ for small values of $k$, which enabled us to develop provably optimal concrete algorithms for 
all $k \leq 10$. However, determining the values of $q_k$ for larger values of $k$ remains a
computationally challenging problem, and finding more efficient ways to compute these values remains 
an interesting open problem.

\appendix

\section{Tables}

Table~\ref{tbl:more-comput-UB} shows the best algorithms our LP approach found for $k=11,12,\ldots,20$. These are (perhaps non-tight) upper bounds on $q_{11}, \ldots, q_{20}$.
Observe how some of the patterns are reminiscent of the pattern used in the algorithm \textsc{Recursive} of Section~\ref{sec:recursive-construction}.

\begin{table}[t]
\centering
\begin{tabular}{|c|c|c|l|c|}
\hline
$k$  & $\lambda$ & $\gamma$  & \multicolumn{1}{c|}{$P$ }  & $n$ \tabularnewline
\hline 
\hline 
11 & 1.4650841 & 8.190656 & (1,3,5,6,1,6,2,10,6,3,6,1,6,2,6,3,9,6) & 18\tabularnewline\hline 
12 & 1.4668421 & 8.862576 & (1,2,3,5,6,7,1,2,6,3,6,7,1,2,6,3,6,9,7) & 19\tabularnewline\hline 
13 & 1.4592320 & 2.94 & (1,3,6,7,4,7,1,7,8,3) & 10\tabularnewline\hline 
14 & 1.4570046 & 58.6 & %
\noindent\begin{minipage}[t]{0.6\columnwidth}%
(1,4,2,6,7,4,7,8,1,8,2,3,7,12,4,7,8,1,4,7,2,7,8,4,13,8,\\
\hphantom{\hspace{0.7ex}}1,8,4,2,7,4,7,8,1,8,4,2,7,12,4,7,13,8)%
\end{minipage} & 44\tabularnewline\hline
15 & 1.4487459 & 2.104027 & (1,2,7,8,4,8,9,5) & 8\tabularnewline\hline 
16 & 1.4487597 & 8.46 & (1,2,4,7,8,9,5,9,1,2,8,4,8,9,5,9,1,2,8,4,8,13,9,5,9) & 25\tabularnewline\hline 
17 & 1.4593611 & 1.694884 & (1,9,5,3,14,8,9) & 7\tabularnewline\hline 
18 & 1.4575670 & 2.57 & (1,8,9,5,9,10,2,5,9,10,3,5) & 12\tabularnewline\hline 
19 & 1.4592194 & 2.45 & (1,9,5,9,10,2,5,9,10,11,3,5) & 12\tabularnewline\hline 
20 & 1.4696048 & 13.3 & %
\noindent\begin{minipage}[t]{0.6\columnwidth}%
(1,5,9,10,2,5,9,10,11,3,5,10,1,5,9,10,11,2,5,10,11,3,5,10,\\
\hphantom{\hspace{0.7ex}}1,5,9,10,6,2,9,10,11,3,5,10)%
\end{minipage} & 36\tabularnewline\hline 
\end{tabular}

\caption{\label{tbl:more-comput-UB}Computationally-verified upper bounds on $q_k$ for $11\le k\le20$.}
\end{table}

Tables~\ref{tbl:recursive-best} and~\ref{tbl:recursive-worst} describe the efficiency of \textsc{Recursive$\left(G^{*},K^{*}\right)$} in two extreme cases:
the ``best'' case $k=2^{m+1}$, which is the special case handled by \textsc{Binary} of~\cite{BDNS13}, and the ``worst'' case $k=2^{m+2}-1$, which shows that the upper bound we proved on the efficiency of \textsc{Recursive} is essentially tight.
\begin{itemize}
\item In the first case, the optimal value $G^*$ for $m\ge7$ is the smallest real root $\neq1$ of 
\[
x^{km/2+k/4}-x^{km/2+k/4-1}-x^{k/2}+1,
\]
i.e.,~\eqref{eq:cond1} and~\eqref{eq:cond2} are tight;
\item In the second case, the optimal value $G^*$ for all $m\ge 0$ is the smallest real root of 
\[
x^{\left(m+1\right)\left(k+1\right)/2}-x^{\left(m+1\right)\left(k+1\right)/2-1}-1,
\]
i.e.,~\eqref{eq:cond1} and~\eqref{eq:cond3} are tight.
\end{itemize}

The second from the right column shows how close $(G^*)^{k/2}$ is to its lower bound $2$, demonstrating the sharpness of Corollary~\ref{cor:LB-optimal}. The rightmost column in each table shows the ``effective constant'' --- defined to be $(qk/((k+1)\ln4)-1)\log_{2}k$ for efficiency $q$ --- as a percentage of the constant $\tau = -\log_2{\ln2}$. The fact that it asymptotically approaches 100\% in Table~\ref{tbl:recursive-worst} shows that indeed $\tau$ is optimal and the analysis is tight.

\begin{table}[t]
\centering
\begin{tabular}{|c|c|c|c|c|c|}
\hline
$k$  & $m$  & Efficiency  & $G^{*}$ is smallest root of  & $\left(G^{*}\right)^{k/2}$  & \%\tabularnewline
\hline
\hline
2  & 0  & 1  & $x-2$  & 2  & $<0$\tabularnewline
\hline
4  & 1  & 1.527864045  & $x^{2}-x-1$  & 2.61803399  & 38.63\% \tabularnewline
\hline
8  & 2  & 1.446619893  & $x^{4}-x-1$  & 2.220744085  & 24.69\% \tabularnewline
\hline
16  & 3  & 1.414522345  & $x^{8}-x-1$  & 2.096981559  & 15.40\% \tabularnewline
\hline
32  & 4  & 1.399982156  & $x^{16}-x-1$  & 2.045751025  & 9.337\% \tabularnewline
\hline
64  & 5  & 1.393037798  & $x^{32}-x-1$  & 2.022250526  & 5.520\% \tabularnewline
\hline
128  & 6  & 1.389641669  & $x^{64}-x-1$  & 2.010975735  & 3.196\% \tabularnewline
\hline
256  & 7  & 1.389039657  & $x^{959}-\left(x^{128}-1\right)/\left(x-1\right)$  & 2.006538067  & 2.996\% \tabularnewline
\hline
512  & 8  & 1.38976776  & $x^{2175}-\left(x^{256}-1\right)/\left(x-1\right)$  & 2.005370202  & 4.265\% \tabularnewline
\hline
1024  & 9  & 1.389961428  & $x^{4863}-\left(x^{512}-1\right)/\left(x-1\right)$  & 2.004616597  & 5.003\% \tabularnewline
\hline
2048  & 10  & 1.389901672  & $x^{10751}-\left(x^{1024}-1\right)/\left(x-1\right)$  & 2.004083324  & 5.413\% \tabularnewline
\hline
4096  & 11  & 1.3897339  & $x^{23551}-\left(x^{2048}-1\right)/\left(x-1\right)$  & 2.003678733  & 5.631\% \tabularnewline
\hline
8192  & 12  & 1.389529892  & $x^{51199}-\left(x^{4096}-1\right)/\left(x-1\right)$  & 2.003356204  & 5.738\% \tabularnewline
\hline
16384  & 13  & 1.389323191  & $x^{110591}-\left(x^{8192}-1\right)/\left(x-1\right)$  & 2.003090123  & 5.785\% \tabularnewline
\hline
32768  & 14  & 1.389128152  & $x^{237567}-\left(x^{k/2}-1\right)/\left(x-1\right)$  & 2.002865287  & 5.799\% \tabularnewline
\hline
65536  & 15  & 1.388949844  & $x^{507903}-\left(x^{k/2}-1\right)/\left(x-1\right)$  & 2.002671985  & 5.796\% \tabularnewline
\hline
131072  & 16  & 1.388789052  & $x^{1081343}-\left(x^{k/2}-1\right)/\left(x-1\right)$  & 2.002503614  & 5.786\% \tabularnewline
\hline
262144  & 17  & 1.388644741  & $x^{2293759}-\left(x^{k/2}-1\right)/\left(x-1\right)$  & 2.002355444  & 5.772\% \tabularnewline
\hline
\end{tabular}

\caption{\label{tbl:recursive-best}Efficiency of \textsc{Recursive$\left(G^{*},K^{*}\right)$}
for $k=2^{m+1}$.}
\end{table}
\begin{table}[h]
\centering
\begin{tabular}{|c|c|c|c|c|c|}
\hline
$k$  & $m$  &  efficiency  & $G^{*}$ is smallest root of  & $\left(G^{*}\right)^{k/2}$  & \% \tabularnewline
\hline
\hline
3  & 0  & 1.145898034  & $x^{2}-x-1$  & 2.058171027  & $<0$\tabularnewline
\hline
7  & 1  & 1.318433761  & $x^{8}-x^{7}-1$  & 2.075892596  & $<0$\tabularnewline
\hline
15  & 2  & 1.408092224  & $x^{24}-x^{23}-1$  & 2.09450451  & 11.62\%\tabularnewline
\hline
31  & 3  & 1.448810165  & $x^{64}-x^{63}-1$  & 2.099878619  & 42.25\% \tabularnewline
\hline
63  & 4  & 1.46399549  & $x^{160}-x^{159}-1$  & 2.097270918  & 63.36\% \tabularnewline
\hline
127  & 5  & 1.466865403  & $x^{384}-x^{383}-1$  & 2.091122952  & 76.82\% \tabularnewline
\hline
255  & 6  & 1.464278319  & $x^{896}-x^{895}-1$  & 2.083917032  & 85.05\% \tabularnewline
\hline
511  & 7  & 1.459602376  & $x^{2048}-x^{2047}-1$  & 2.076835761  & 89.98\% \tabularnewline
\hline
1023  & 8  & 1.454408143  & $x^{4608}-x^{4607}-1$  & 2.070357915  & 92.91\% \tabularnewline
\hline
2047  & 9  & 1.449377613  & $x^{10240}-x^{10239}-1$  & 2.064618545  & 94.66\% \tabularnewline
\hline
4095  & 10  & 1.444769023  & $x^{22528}-x^{22527}-1$  & 2.059600382  & 95.72\% \tabularnewline
\hline
8191  & 11  & 1.440647199  & $x^{49152}-x^{49151}-1$  & 2.055228333  & 96.39\% \tabularnewline
\hline
16383  & 12  & 1.436994729  & $x^{106496}-x^{106495}-1$  & 2.051413108  & 96.83\% \tabularnewline
\hline
32767  & 13  & 1.43376402  & $x^{229376}-x^{229375}-1$  & 2.048069607  & 97.14\% \tabularnewline
\hline
65535  & 14  & 1.430900622  & $x^{491520}-x^{491519}-1$  & 2.045123383  & 97.36\% \tabularnewline
\hline
131071  & 15  & 1.428352881  & $x^{1048576}-x^{1048575}-1$  & 2.042511814  & 97.54\% \tabularnewline
\hline
262143  & 16  & 1.426075306  & $x^{2228224}-x^{2228223}-1$  & 2.040183169  & 97.69\% \tabularnewline
\hline
\end{tabular}

\caption{\label{tbl:recursive-worst}Efficiency of \textsc{Recursive$\left(G^{*},K^{*}\right)$}
for $k=2^{m+2}-1$.}
\end{table}

\begin{acks}

The work was partially supported by the 
the European Research Council under the ERC starting grant agreement n.~757731 (LightCrypt), the BIU Center for Research in Applied Cryptography and Cyber Security in conjunction with the Israel National Cyber Bureau in the Prime Minister's Office, and
by the Israeli Science Foundation through grant No.~573/16.
\end{acks}

\bibliographystyle{ACM-Reference-Format}
\bibliography{Checkpoints}

\end{document}